\numberwithin{equation}{section}
\numberwithin{figure}{section}
\tikzset{
  on each segment/.style={
    decorate,
    decoration={
      show path construction,
      moveto code={},
      lineto code={
        \path [#1]
        (\tikzinputsegmentfirst) -- (\tikzinputsegmentlast);
      },
      curveto code={
        \path [#1] (\tikzinputsegmentfirst)
        .. controls
        (\tikzinputsegmentsupporta) and (\tikzinputsegmentsupportb)
        ..
        (\tikzinputsegmentlast);
      },
      closepath code={
        \path [#1]
        (\tikzinputsegmentfirst) -- (\tikzinputsegmentlast);
      },
    },
  },
  mid arrow/.style={postaction={decorate,decoration={
        markings,
        mark=at position .7 with {\arrow[#1]{stealth}}
      }}},
}
\def\oversortoftilde#1{\mathop{\vbox{\m@th\ialign{##\crcr\noalign{\kern3\p@}%
				\sortoftildefill\crcr\noalign{\kern3\p@\nointerlineskip}%
				$\hfil\displaystyle{#1}\hfil$\crcr}}}\limits}
\def\sortoftildefill{$\m@th \setbox\z@\hbox{$-$}%
	\braceld\leaders\vrule \@height\ht\z@ \@depth\z@\hfill\braceru$}
\theoremstyle{plain}
\newtheorem*{thm*}{Theorem}
\newtheorem{thm}{Theorem}[section]
\newtheorem{lem}[thm]{Lemma}
\newtheorem{prop}[thm]{Proposition}
\theoremstyle{definition}
\newtheorem{defn}[thm]{Definition}
\newtheorem*{defn*}{Definition}
\newtheorem{rem}[thm]{Remark}
\newtheorem*{thm:maintheoremCstar}{Theorem \ref{thm:maintheoremCstar}}
\crefname{lemma}{lemma}{lemmas}
\Crefname{lemma}{Lemma}{Lemmas}
\crefname{thm}{theorem}{theorems}
\Crefname{thm}{Theorem}{Theorems}
\crefname{defn}{definition}{definitions}
\Crefname{defn}{Definition}{Definitions}
\DeclarePairedDelimiterX{\abs}[1]{\lvert}{\rvert}{\ifblank{#1}{{}\cdot{}}{#1}}
\newcommand{\calh}{\mathcal{H}}
\title{\boldmath Holographic tensor networks from hyperbolic buildings}
\author{Elliott Gesteau,}
\author{Matilde Marcolli,}
\author{Sarthak Parikh}
\affiliation{Division of Physics, Mathematics, and Astronomy, California Institute of Technology,\\Pasadena, CA 91125, U.S.A.}
\emailAdd{egesteau@caltech.edu}
\emailAdd{matilde@caltech.edu}
\emailAdd{sparikh@caltech.edu}
\abstract{We introduce a unifying framework for the construction of holographic tensor networks, based on the theory of hyperbolic buildings. The underlying dualities relate a bulk space to a boundary which can be homeomorphic to a sphere, but also to more general spaces like a Menger sponge type fractal. In this general setting, we give a precise construction of a large family of bulk regions that satisfy complementary recovery. For these regions, our networks obey a Ryu--Takayanagi formula. The areas of Ryu--Takayanagi surfaces are controlled by the Hausdorff dimension of the boundary, and consistently generalize the behavior of holographic entanglement entropy in integer dimensions to the non-integer case. Our construction recovers HaPPY--like codes in all dimensions, and generalizes the geometry of Bruhat--Tits trees. It also provides examples of infinite-dimensional nets of holographic conditional expectations, and opens a path towards the study of conformal field theory and holography on fractal spaces.}
\pgfplotsset{compat=1.17}
\begin{document}
\maketitle
\flushbottom

\section{Introduction}

Holographic tensor networks are one of the main tools to model the emergence of spacetime in the AdS/CFT correspondence, and the associated error-correcting structure. Since the discovery of the HaPPY code \cite{Pastawski:2015qua}, a plethora of examples have shown that holographic tensor networks make it possible to model numerous aspects of holography, in particular, the Ryu--Takayanagi formula \cite{Ryu:2006bv, Faulkner:2013ana}  and quantum error correction \cite{Harlow:2016vwg}.

Most holographic tensor networks realize an exact or approximate quantum error-correcting code, in the sense that they (almost) isometrically map a Hilbert space of semiclassical bulk degrees of freedom, modelled by qudits  associated to dangling legs, to a boundary Hilbert space, which represents the Hilbert space of the boundary conformal field theory, and is modelled by qudits on the boundary of the tensor network. For well-chosen boundary regions, the tensor network satisfies the quantum Ryu--Takayanagi formula \cite{Pastawski:2015qua}. This means that the entanglement entropy of the restriction to a given boundary region of a state in the code subspace equals the sum of the entanglement entropy of that state in the bulk and an area contribution proportional to the number of internal legs of the tensor network cut by the Ryu--Takayanagi surface of the region. The latter is defined as the surface delimiting the region attainable from the boundary by the greedy algorithm \cite{Pastawski:2015qua}. Complementary recovery is then achieved when the greedy algorithm reaches complementary bulk regions from complementary parts of the boundary.

Holographic tensor networks have an interesting geometric structure, which ranges from tesselations of the hyperbolic plane \cite{Pastawski:2015qua} or higher-dimensional spaces \cite{Kohler:2018kqk} to $p$-adic spaces \cite{Heydeman:2018qty} (see \cite{Chen:2021ipv} for a complimentary perspective). Although it is often briefly mentioned that hyperbolic tesselations have to do with Coxeter systems \cite{Kohler:2018kqk}, the interplay between holographic tensor networks and hyperbolic geometry has been left largely unexplored.\footnote{However, see~\cite{Bhattacharyya:2016hbx} for tensor network constructions in terms of tessellations of the hyperbolic plane inspired by Coxeter systems and \cite{Kohler:2018kqk} for a geometric condition for a tiling to define an isometry.} The goal of this paper is to investigate this link in more detail, and show the pertinence of the language of Gromov-hyperbolicity and building theory to talk about holographic tensor networks.

More precisely, one can ask the following questions:
\begin{itemize}
    \item What are the geometric structures that underlie the construction of holographic tensor networks?
    \item How can we know when a graph makes a good holographic quantum error-correcting code?
    \item Can we construct tensor networks that model holographic dualities where the boundary is not homeomorphic to a sphere?
    \item How can we predict which boundary regions satisfy complementary recovery?
    \item How do we take an infinite-dimensional limit of holographic tensor networks?
    \item What are some insights that the geometry of holographic tensor networks can provide for studying full AdS/CFT?
\end{itemize}

In this paper, we will provide an answer to these questions in the case of networks constructed out of perfect tensors, utilizing the framework of Gromov hyperbolicity and hyperbolic buildings. It will turn out that known examples of holographic tensor networks can all be described by the notion of hyperbolic building. Buildings, which can have different geometric structures (hyperbolic, Euclidean, spherical) are a geometric construct originally introduced by Jacques Tits \cite{BuildingBook} aimed at geometrizing some aspects of group theory. In order to do geometry on hyperbolic buildings, the right toolkit is provided by Gromov's theory of hyperbolicity, which studies metric spaces whose distance has a particular property that can be viewed as a more abstract formulation of the concept of negative curvature. On such spaces, a notion of boundary at infinity, the Gromov boundary, can be defined. In our context, holographic dualities will be between semiclassical theories living on a hyperbolic building and theories living on its Gromov boundary. Tensor networks will contain bulk dangling legs in the chambers of the buildings, and boundary legs at a certain cutoff of the building. These boundary legs can be thought of as a coarse-grained approximation to the Gromov boundary.

One striking feature of this approach is that, unlike the case of the HaPPY code or similar 
tessellations, the Gromov boundary of most hyperbolic buildings is not isomorphic to a sphere -- rather, it is often isomorphic to a fractal, which has a much more intricate geometric structure. Thus, our holographic dualities provide examples of dualities where the boundary theory lives on a sphere (or more generally a homology sphere) of any dimension, but also where it lives on more complicated spaces of non-integer dimension. Moreover, as we shall see, there is a close analog of conformal invariance for these theories, and it exhibits a behavior closely related to that of a CFT.

Using the more general point of view of hyperbolic buildings also allows us to reflect on some features of known holographic tensor networks, like complementary recovery. While this property is supposed to hold for arbitrary boundary regions up to nonperturbative errors in $G_N$ in AdS/CFT \cite{Dong:2017xht}, it does not hold for all regions of the HaPPY code \cite{Pastawski:2015qua}. While this fact may seem puzzling, it has a very natural explanation in terms of hyperbolic buildings: their global symmetry groups are smaller than the conformal group, hence only regions which are well-adapted to these symmetry groups will satisfy complementary recovery. In particular, we will provide a systematic construction of regions in the network that do satisfy complementary recovery, and applies to all networks with perfect tensors. 

For these regions, we will show that a Ryu--Takayanagi formula holds, and that for ball-shaped boundary regions the number of links on the Ryu--Takayanagi surface follows a logarithmic law in the radius when the boundary time slice has dimension 1, and a power law in the radius, with exponent the \textit{Hausdorff dimension of the boundary minus one} when the boundary time slice has dimension greater than 1. This recovers known results for the scaling of entanglement entropy in traditional conformal field theory, and introduces some new scalings that are very suggestive: the Ryu--Takayanagi surfaces see the fractal structure of the boundary! It is then of course very tempting to speculate that networks with fractal boundary simulate conformal field theories on fractal spaces.

Another interesting aspect of our approach is that the Gromov boundary lives at infinity, and makes it very natural to define an infinite-dimensional limit of holographic tensor networks. This problem has already been touched upon in \cite{Gesteau:2020hoz, Gesteau:2020rtg}, and more generally, the question of describing holographic quantum error-correction in the language of infinite-dimensional operator algebras is an active research program \cite{Kang:2018xqy, Gesteau:2020rtg, Faulkner:2020hzi, Gesteau:2021tba}. Here we will see that there is an appropriate way to take the limit of our holographic codes such that they give a net of holographic conditional expectations. This structure has been recently shown to capture essential aspects of bulk reconstruction in AdS/CFT \cite{Faulkner:2020hzi}. 

We now give the main results of this work:
\begin{itemize}
    \item A general framework for the construction of holographic tensor networks with perfect tensors is given in terms of building theory and Gromov hyperbolic spaces.
    \item This allows us to recover all known constructions, and gives examples of holographic tensor networks in all integer dimensions.
    \item A lot of tensor networks fitting into that framework also have a fractal boundary, hinting at new holographic dualities where the boundary theory lives on a fractal.
    \item A condition for our networks to be isometric is given, as well as a construction of regions that satisfy complementary recovery. This construction can be applied to all the buildings examined in this paper.
    \item A Ryu--Takayanagi formula is proven, showing that boundary entanglement entropy for ball-shaped regions follows a logarithmic law in the radius when the boundary has dimension 1, and a power law in the radius when the boundary has higher Hausdorff dimension, where the exponent is the Hausdorff dimension of the boundary minus one.
    \item A general technique is given to construct an infinite-dimensional limit of our networks in the language of operator algebras. This limit gives rise to a net of holographic conditional expectations.
    \item All our results can be applied to known examples of holographic tensor networks with perfect tensors, such as the HaPPY code.
\end{itemize}

The rest of the paper is organized as follows: In Section~\ref{GrSec}, we recall the basics of the theory of Gromov-hyperbolic spaces and their boundaries, as well as the notions of hyperbolic groups and buildings. In Section~\ref{BourdonSec}, we focus for clarity on a particular case: that of Bourdon buildings, which can be understood as HaPPY codes with branching. These buildings have a boundary homeomorphic to a fractal Menger sponge, and we show that the resulting tensor networks satisfy complementary recovery for nice regions, a Ryu--Takayanagi formula with the expected scaling in terms of the Hausdorff dimension of the boundary. We also construct an infinite-dimensional limit for these tensor networks. In Section~\ref{HighDimSec}, we extrapolate the methods of the previous section to define holographic quantum error correcting codes on a much larger class of higher-dimensional hyperbolic buildings. We give some explicit examples, including ones where the boundary is a homology-sphere of arbitrary integer dimension. In Section~\ref{DiscussSec}, we briefly comment on the results and discuss some potential future directions.
In Appendix~\ref{APPI53} we present an explicit example of a Bourdon building, and in the slightly more technical Appendix~\ref{PStheory} we summarize relevant results from Patterson-Sullivan theory which help formalize aspects of conformal field theories on fractal spaces.

\section{Gromov-hyperbolic spaces, hyperbolic groups and buildings}\label{GrSec}
In this section, we introduce the general notion of Gromov-hyperbolic space, which will be underlying our choices of bulk spaces. We focus on two types of Gromov-hyperbolic spaces, hyperbolic groups and hyperbolic buildings, which will be the ones we will use in order to construct our examples of holographic duality.
\subsection{General definitions}
Gromov-hyperbolic spaces naturally generalize the setup in which one usually considers holographic dualities. We begin with a definition of the Gromov product, which is the crucial ingredient in the definition of Gromov-hyperbolic spaces (see \cite{CoDePa}):
\begin{defn}
For $(X,d)$ a metric space, the Gromov product of two points $y,z\in X$ with respect to $x\in X$ is given by $$(y,z)_x:=\frac{1}{2}(d(x,y)+d(x,z)-d(y,z)).$$
\end{defn}
From there, a Gromov-hyperbolic space is defined by the following condition:
\begin{defn}
Let $\delta>0$. $(X,d)$ is said to be $\delta$-hyperbolic if for all $x,y,z,w\in X,$ $$(x,z)_w\geq \mathrm{min}((x,y)_w,(y,z)_w)-\delta.$$ $X$ is then Gromov-hyperbolic if it is $\delta$-hyperbolic for some $\delta>0$.
\end{defn}
For our purposes, one of the main interesting features of Gromov-hyperbolic spaces is that they are endowed with a natural notion of boundary, which will make it possible for us to formulate a bulk-to-boundary correspondence. We first need to formulate a notion of geodesics in Gromov-hyperbolic spaces.

\begin{defn}
Fix an origin $O\in X$. A geodesic ray in $X$ is an isometry $r: [0,+\infty)\longrightarrow X$ such that $$r(0)=O$$ and for all $t>0$, $r([0,t])$ is the shortest path from $O$ to $r(t)$ in X. Two geodesic rays $r_1$ and $r_2$ are said to be equivalent if there exists $K>0$ such that for all $t>0$, $$d(r_1(t),r_2(t))\leq K.$$The Gromov boundary $\partial X$ of $X$ is defined to be the set of equivalence classes of geodesic rays starting at $O$.
\end{defn}

$X\cup\partial X$ is then endowed with a natural topology: a basis is given by the sets of the form $$V(p,\rho):=\{q\in \partial X, \mathrm{there\;exist\;geodesic\;rays\;}r_1,\;r_2\;\mathrm{ending\;at\;}p,q\;\mathrm{such\;that\;}\underset{t_1,t_2\rightarrow +\infty}{\mathrm{lim}}(r_1(t_1),r_2(t_2))_O\geq \rho.\} $$

A convenient feature of the Gromov boundary $\partial X$ is that one can construct a natural metric on it \cite{Coornaert}. In what follows, we shall fix a base point $O$ for $X$. For $x\in X$, define $$|x|:=d(x,O).$$ Then for $a>1$, define $$|x-y|_a:=\underset{r\; \mathrm{path\; from\;}x\;\mathrm{to}\;y}{\mathrm{inf}}\int_r a^{-|x|} dx.$$
One can then prove \cite{Coornaert} that there exists $a_0>1$ such that for $1<a<a_0$, $X\cup\partial X$ is homeomorphic to the completion of $X$ for the metric $|\cdot|_a$. Moreover, the metric $|\cdot|_a$ is very well-controlled by the Gromov product of $X$ (we shall see explicit examples of this in what follows). In particular, there exists a constant $\lambda$ which only depends on $\delta$ and $a$ such that if $\xi$ and $\eta$ are on $\partial X$, for $x$ and $y$ in small enough neighborhoods of $\xi$ and $\eta$, we have \cite{Coornaert}: 
\begin{equation}\label{distGr}
\lambda^{-1} a^{-(x,y)_O}\leq |\xi - \eta|_a \leq \lambda a^{-(x,y)_O}.
\end{equation}
Hence $|\cdot|_a$ is a natural metric, called \textit{visual metric}, on $\partial X$. From here on, we will consider the choices of the base point $O$ and of a given $a>1$ as implicit and
we will drop the explicit notation unless needed.\footnote{For instance, from here on we will drop the subscript  $a$ in the metric $|\cdot|_a$.}

Most of the Gromov-hyperbolic spaces we will consider here can be interpreted in terms of hyperbolic groups. Here we give a general definition of a hyperbolic group, of which we will consider specific explicit examples in the subsequent sections.

\begin{defn}
Let $G$ be a finitely generated group, and $X$ be its Cayley graph. $G$ is said to be a hyperbolic group if $X$, endowed with its graph metric, is Gromov-hyperbolic.
\end{defn}

\subsection{Hyperbolic buildings}

In this paper, the main setup will be that of hyperbolic buildings. We will also encounter their isometry groups, which will turn out to be hyperbolic groups. The theory of buildings is a rich and fruitful mathematical framework, first introduced by Tits \cite{BuildingBook}, whose goal is to geometrize notions of group theory. Here we introduce the basic terminology associated to this theory, which will be utilized in our paper. We will mostly follow the presentation of \cite{Thomas:2016}, which the interested reader can consult for a more thorough introduction.

The simplest examples of buildings are Coxeter systems:

\begin{defn}
A \textit{Coxeter group} is a group $W$ that admits a presentation of the form $$W=\langle s\in S\, |\,  (st)^{m_{st}}=1 \mbox{ for } s,t\in S \rangle \, ,$$ with $S$ a finite set, $m_{ss}=1$ for $s\in S$, $m_{st}$ a (possibly infinite) integer $\geq 2$ for $s\neq t$. The pair $(W,S)$ is then called a \textit{Coxeter system}.
\end{defn}

A particularly nice class of Coxeter systems is given by the \textit{right-angled} ones, which correspond to the case where the $m_{st}$ for $s\neq t$ are either $2$ or $\infty$.

  \begin{figure}
 \begin{center}
  \includegraphics[scale=0.5]{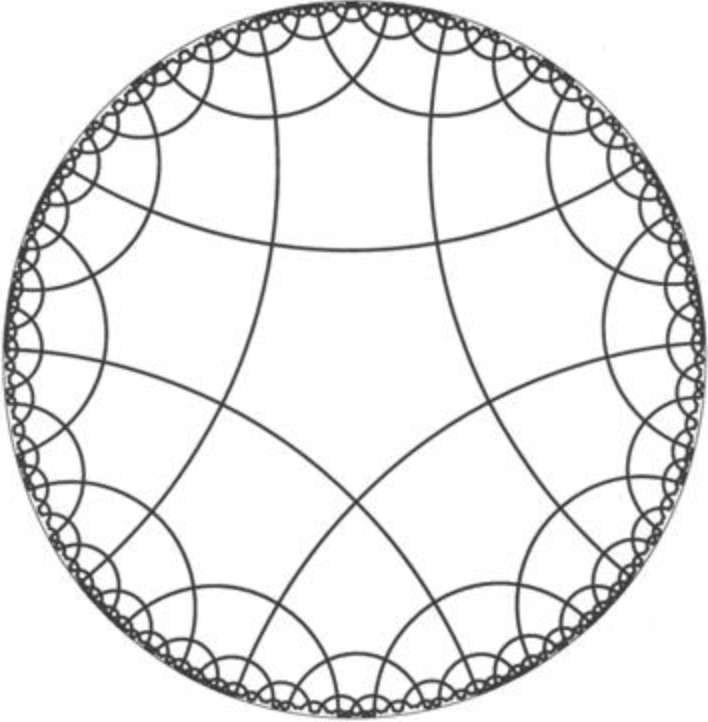}
 \end{center}
 \caption{The right-angled pentagon tiling of the hyperbolic plane ${\mathbb H}^2$ used in the construction of the HaPPY code. \label{FigPentagon}}
 \end{figure}

Many interesting Coxeter systems arise from regular tessellations of $n$-spheres, $n$-Euclidean space, or $n$-hyperbolic spaces. In this case, the Coxeter group is generated by reflections with respect to the sides of the basic  polyhedron. 
This is for example the case for the HaPPY code, which corresponds to a right-angled Coxeter system generated by a right-angled regular pentagon in the hyperbolic plane (see Figure~\ref{FigPentagon}). 
More generally, a convex polyhedron in $\mathbb{X}^n$ (which can be the $n$-sphere, the $n$-Euclidean space or the $n$-hyperbolic space), with all dihedral angles submultiples of $\pi$ is called a \textit{Coxeter polytope}. These three possible cases for $\mathbb{X}^n$ are,
respectively, referred to as {\em spherical}, {\em Euclidean}, and {\em hyperbolic} buildings. 

Another important notion in order to define a building is that of \textit{polyhedral complex}. Without going into too much generality, they are constructed by
gluing polyhedra in spheres, Euclidean space or hyperbolic space, using isometries along the faces. The main object of interest in a polyhedral complex is its \textit{link} at each vertex $x$: it is the $(n-1)$-dimensional polyhedral complex obtained by intersecting the given polyhedral complex with an $n$-sphere of sufficiently small radius centered at $x$. For example, for a 2-complex, the link at vertex $x$ is a graph whose edges correspond to the faces adjacent to $x$, and whose vertices correspond to the edges incident on $x$.

We are now ready to define a hyperbolic building, utilizing the notions of Coxeter polytopes and polyhedral complexes.

\begin{defn}
Let $P$ be a hyperbolic Coxeter polytope, and let $(W,S)$ be the associated Coxeter system. A \textit{hyperbolic building} of type $(W,S)$ is a polyhedral complex $\Delta$ with a maximal family of subcomplexes, called \textit{apartments}, such that they each are isometric to a tessellation of $\mathbb{H}^n$ by copies of $P$ called \textit{chambers}, and \begin{itemize}
    \item Any two chambers of $\Delta$ are contained in a common apartment.
    \item Between any two apartments, there exists an isometry that fixes their intersection.
\end{itemize}
\end{defn}

In the case of a hyperbolic building, $W$, called the Weyl group of the building, is a hyperbolic discrete subgroup of the isometry group of $\mathbb{H}^n$. It is also possible to show that the link of an $n$-dimensional hyperbolic building at each vertex is an $(n-1)$-dimensional spherical building.  
The link structure of our buildings will be crucial for our proof of complementary recovery.

\begin{rem}
The condition that the apartments are tessellations of $\mathbb{H}^n$ can be relaxed to include a larger class of hyperbolic buildings, where the apartments are certain polyhedral complexes (Davis--Moussong complexes) with a hyperbolic structures, which are not tessellations of a single hyperbolic space. This generalization makes it possible to obtain hyperbolic buildings in arbitrary dimension, and we will discuss it in Section~\ref{HighDimSec}.
\end{rem}

\section{Holography on Bourdon buildings}\label{BourdonSec}

Our first example of building holography is given by the case where the bulk space is the Bourdon building $I_{p,q}$ \cite{BourdonPaper}, whose boundary is a Menger sponge, which is universal among topological spaces of topological dimension 1, in the sense that all of them are homeomorphic to a subset of the Menger sponge. When interpreted as a holographic tensor network, we will see that our bulk space gives rise to the expected properties of holographic codes and states: complementary recovery, and a Ryu--Takayanagi formula involving the Patterson--Sullivan measure on the boundary.

\subsection{Bourdon buildings}

We start by defining the building $I_{p,q}$, for $p\geq 5$ and $q\geq 3$, closely following \cite{BourdonPaper}. For the case $q=2$ see Remark~\ref{q2rem} below.

\begin{defn}\label{BourdonDef}
Let $p,q$ be two integers with $p\geq 5$ and $q\geq 3$.
The Bourdon building $I_{p,q}$ is the only simply connected cellular 2-complex such that its 2-cells are isometric to a regular $p$-gon, are attached by their edges and vertices, two 2-cells share at most one edge or one vertex, and the link of each vertex is the bipartite graph $K(q,q)$.
\end{defn} 

The existence and uniqueness of such a building is proven in \cite{BourdonPaper}. Let $\Gamma_{p,q}$ be the hyperbolic group defined by its presentation
$$\Gamma_{p,q}:=\langle s_1,...,s_p\,| \, s_i^q=1,[s_i,s_{i+1}]=1 \rangle.$$
Then $\Gamma_{p,q}$ acts simply transitively on the set of chambers of $I_{p,q}$. In particular, if one fixes a zero chamber in $I_{p,q}$, then the chambers of $I_{p,q}$ can be seen as the words formed by $\Gamma_{p,q}$ up to the relations in the group presentation, where each letter applies a group transformation to the chamber.

  \begin{figure}
 \begin{center}
  \includegraphics[scale=0.5]{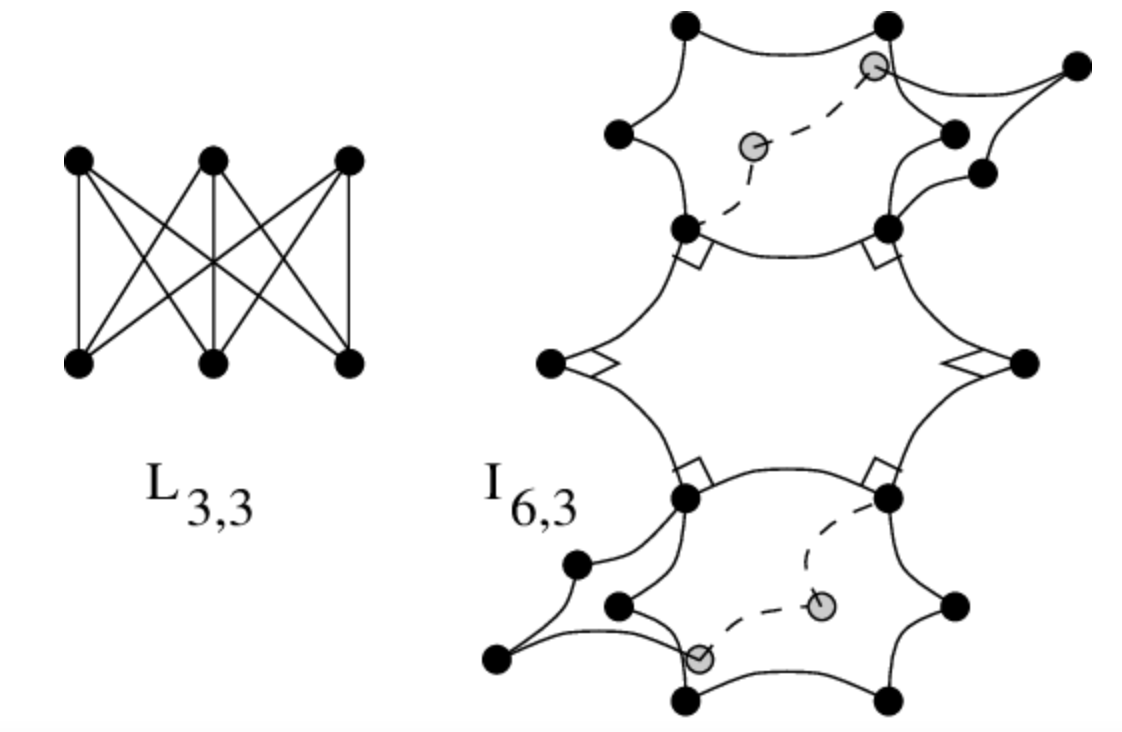}
 \end{center}
 \caption{An example of a subset of a Bourdon building with the associated link, from \cite{BouPaj}. \label{FigBourdon}}
 \end{figure}

\begin{rem}\label{q2rem}
In the case where $q=2$, $\Gamma_{p,q}$ is just a tesselation of the hyperbolic plane by $p$-gons (and the Gromov boundary is a circle). In particular, whenever this tessellation gives rise to an isometric network of perfect tensors, it can be interpreted as a form of HaPPY tiling. We refer to these tessellations as the $p$-gon HaPPY tilings, with $p\geq 5$.
\end{rem}

For $q\geq 3$, the Gromov boundary changes, and becomes a Menger sponge. The Bourdon building then has branching: each tile branches out into $q-1$ tiles at each edge. Nevertheless, the building still possesses a HaPPY-like structure, as its apartments are now tilings of the hyperbolic plane by $p$-gons. The $q=2$ case corresponds to when the building only has one apartment. This nice apartment structure will greatly simplify our analysis of tensor networks on Bourdon buildings, as it will enable us to transpose a lot of useful error-correcting properties of the HaPPY code to the Bourdon case.

\subsection{Bourdon tensor networks}

We now define our Bourdon tensor network. The idea is to extend the HaPPY construction to the more general case of the building $I_{p,q}$. In the rest of this section, we will consider the case of holographic codes with a nontrivial code subspace, and we will hence suppose for simplicity that $p$ is odd, in order for our bulk tensors to always have an even number of legs, independently of $q$.

\begin{defn}\label{BtensorDef}
The $I_{p,q}$ tensor network is constructed in the following way: 
\begin{itemize}
\item Insert a perfect tensor in the chambers of the building.
\item Perform an index contraction between every two chambers sharing an edge. 
\item Add $q-1$ dangling legs for each tensor. 
\item Choose a central tile, and cut the building at a finite distance $\Lambda$ from this central tile. 
\end{itemize}
As usual for holographic quantum error correcting codes, the bulk Hilbert space is identified with the tensor product of the bulk dangling leg 
Hilbert spaces, while the boundary Hilbert space is identified with the tensor product of the boundary leg Hilbert spaces.
\end{defn}

\begin{rem}
The existence of perfect tensors that work for any given choice of $p$ and $q$ in the range of Definition~\ref{BourdonDef} follows from \cite{HuWy}, see
also \cite{HuElSiGu}.
\end{rem}

\begin{rem}
We could have allowed for different types of tensor networks. In particular, when $p$ increases, we could have introduced $k(q-1)$ dangling legs in the bulk with $k\leq p-4$ (as long as the total number of legs is even), while still satisfying the isometry and entanglement wedge reconstruction properties proven in the next sections. 
\end{rem}

\subsection{Bulk-to-boundary isometry}

We first prove that our network defines an isometry at each layer.

\begin{thm}
At each layer, the Bourdon tensor network of Definition~\ref{BtensorDef} determines an isometry from the code subspace to the physical Hilbert space.
\end{thm}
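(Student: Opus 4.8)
The plan is to reduce the isometry statement to the corresponding statement for the HaPPY code, using the apartment structure of the Bourdon building $I_{p,q}$. The key observation is that each apartment of $I_{p,q}$ is a tiling of $\mathbb{H}^2$ by right-angled $p$-gons, i.e.\ a $p$-gon HaPPY tiling in the sense of Remark~\ref{q2rem}, and a perfect-tensor network on such a tiling is known to be an isometry from the code subspace to the cut boundary (for $p$ odd, with the chosen number of dangling legs). So I would first isolate what extra structure branching introduces, and then argue that branching only adds further isometric ``stacking'' on top of the single-apartment case.

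First I would set up the \emph{greedy/pushing} argument directly on $I_{p,q}$. Recall that a tensor network is an isometry from the code subspace to the boundary Hilbert space precisely when the greedy algorithm, started from the full cut boundary, sweeps the entire bulk: pushing legs inward, one shows that the contraction map can be written as a composition of isometries (each perfect tensor, viewed with a majority of legs ``incoming'' from the boundary side, is an isometry onto its span). Concretely, I would fix the central tile, order the chambers of the truncated building $I_{p,q}^{\le\Lambda}$ by their combinatorial distance $\Lambda, \Lambda-1, \dots, 0$ from the boundary, and absorb them one layer at a time. The content to check is that when a chamber is absorbed, strictly more than half of its legs are already contracted against the partial isometry built so far (equivalently, at most half point ``outward'' as not-yet-absorbed legs plus dangling legs). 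For a perfect $2N$-leg tensor this is exactly the condition that guarantees the map is an isometry.

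The leg-counting is where the hypotheses $p\ge 5$, $p$ odd, $q\ge 3$, and ``$q-1$ dangling legs per tensor'' enter. A chamber at distance $k<\Lambda$ shares one edge with its (unique) parent toward the center and has its remaining $p-1$ edges shared with children further out; but because of branching each such outward edge is shared with $q-1$ sibling copies in the various apartments through that chamber, so I must be careful to count the contracted legs of a given tensor, not of an edge. The right way to see it: restrict attention to a single apartment containing the chamber — there the local picture is identical to the $p$-gon HaPPY tiling, where the inductive absorption is already known to work because $\lceil (p-1)/2 \rceil \ge \lceil p/2\rceil$ legs (parent edge plus the outward edges of that apartment) are incoming once the outer layers are absorbed, and $p$ odd makes the dangling-leg bookkeeping consistent with an even total. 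Branching to $q-1$ apartments then just means the remaining chambers in other apartments are absorbed in parallel by the same rule; since each is itself isometric, and they are glued only along already-absorbed sub-apartments, the composite remains an isometry. I would phrase this cleanly as: $I_{p,q}^{\le\Lambda}$ is a union of finitely many HaPPY apartments glued along convex subcomplexes, and a colimit of isometries along isometric identifications on overlaps is an isometry.

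The main obstacle I anticipate is the careful handling of \emph{shared vertices versus shared edges}, and relatedly the possibility that the greedy absorption order consistent within each apartment is not globally consistent — two apartments might disagree about which of two adjacent chambers should be absorbed first. I expect this is resolved by the ``at most one edge or one vertex shared'' condition in Definition~\ref{BourdonDef} together with the CAT($-1$)/convexity of the combinatorial distance function on a hyperbolic building: the level sets of distance-to-center are ``convex enough'' that the layer-by-layer order is globally well-defined, so the per-apartment orders patch. The other point requiring care is making precise that ``perfect tensor with a majority of legs incoming is an isometry onto its image'' in the presence of the dangling legs — here the dangling legs count as outgoing, which is exactly why we need strictly more than half the non-dangling legs to be already contracted, and this is where the constraint on the number of dangling legs ($q-1$, or more generally $k(q-1)$ with $k \le p-4$) is used. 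I would not grind through the explicit inequalities in the sketch, but flag that they amount to checking $(p-1) \ge (p-1) - (\text{outward edges}) + (q-1)\cdot(\dots)$ reduces, per apartment, to the known HaPPY inequality $p-2 \ge 1$ together with $p$ odd.
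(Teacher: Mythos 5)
Your overall strategy is the same as the paper's: order the chambers by their (global) layer, use the fact that every chamber sits inside an apartment isometric to a $p$-gon HaPPY tiling to import the HaPPY leg-counting, use the symmetry of the branching to account for the extra legs per edge, break same-layer ties by an arbitrary acyclic choice, and conclude by composing per-tensor isometries. The paper phrases this as constructing an acyclic orientation in which every tensor has at least as many outgoing as incoming legs (dangling legs counted as incoming), by induction on the layer; your ``absorption'' description is the adjoint formulation of the same argument, so no genuinely different route is involved.

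That said, three of your supporting claims need repair before the argument closes. First, the reduction you call cleanest --- writing $I_{p,q}^{\le\Lambda}$ as HaPPY apartments glued along convex subcomplexes and taking a ``colimit of isometries'' --- is not a valid decomposition of the network: each tensor is a single perfect tensor with $(p+1)(q-1)$ legs whose contractions across any one of its edges reach $q-1$ different chambers lying in many different apartments, so the building network is not assembled from apartment-wise HaPPY isometries; only the combinatorial layer data of an apartment (through the central tile and the given chamber) can be imported, which is exactly how the paper uses it. Second, your threshold ``strictly more than half of its legs already contracted'' is too strong and would fail in the base example: for $p=5$, $q=3$ a worst-case tile has $12$ legs of which exactly $6$ point outward ($2$ dangling $+$ $2$ toward the previous layer $+$ $2$ same-layer against $3\times 2$ outward), so you need ``at least half,'' as in your own parenthetical and in the paper's ``at least as many outgoing as incoming.'' Third, the picture of ``one parent edge and $p-1$ child edges'' is false in a $\{p,4\}$ apartment (vertex-children have two non-outward edges); the correct input is that at most two edges are non-outward, i.e.\ the HaPPY inequality you eventually fall back on, and then each non-outward edge contributes at most $q-1$ incoming legs while each of the $\ge p-2$ outward edges contributes $q-1$ outgoing legs, giving $(p-2)(q-1)\ge 3(q-1)$ as required. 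Finally, the global-consistency worry you address with CAT($-1$) convexity is handled more simply: layers are defined by distance to the fixed central chamber (apartments are convex, so this agrees with the apartment-intrinsic count), and adjacent same-layer tensors are simply given any acyclic orientation, which suffices for the composition argument.
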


\begin{proof}
In this proof, we will freely use the fact that the $p$-gon HaPPY tilings define an isometry from the bulk to the boundary. For a proof, see \cite{Pastawski:2015qua}. This being said, our strategy will be to introduce an acyclic orientation on the edges of the network such that each tensor has at least as many outgoing links as incoming links (including bulk nodes). We will do this by induction on the layer number $n$. At the first layer, only the bulk node is introduced. We assign an outgoing orientation to all the edges, and the condition is clearly satisfied. Now suppose that up to layer $n$, the network determines an isometry. Take a chamber of the building at layer $n+1$. This chamber can be included in an apartment, which looks like a HaPPY tiling. Inside this apartment, there are more tiles touching our chamber that are part of layer $n+2$ than there are that are part of layer $n$ or $n+1$. By symmetry of the building, for a given tile at layer $n+2$, there are $q-2$ other tiles which are also at layer $n+2$ and share the same edge with our chamber. We therefore define an orientation on the network in the following way: if two chambers share an edge, and one is in a higher layer than the other, then define the orientation of the network from the one in the lower layer to the one in the higher layer. Then collect all adjacent tiles that are in the same layer, and define any acyclic orientation on the corresponding subgraph. This gives a well-defined orientation that gives an explicit isometric interpretation to the tensor network. 
\end{proof}

We focused here on the case where we have a nontrivial code subspace, for which the isometry condition can be stated. One can also
consider similarly the case where we just have a single holographic state.

\subsection{Entanglement wedges}\label{WedgeSec}

Just like in the HaPPY code, only certain bulk regions in the Bourdon tensor network will satisfy complementary recovery, and hence the Ryu--Takayanagi formula. This has to do with the fact that the isometry group of the tensor network is not quite the whole conformal group.  Here, we give a description of two nice families of such regions, thanks to the notion introduced in \cite{BourdonPaper} of \textit{tree-wall} in the bulk as well as the link structure of the Bourdon building.

Let us summarize the tree-wall construction of \cite{BourdonPaper}. 

\begin{defn}
A \textit{wall} in $I_{p,q}$ is a bi-infinite geodesic contained in the 1-skeleton of $I_{p,q}$. One can then define an equivalence relation on the 1-skeleton of $I_{p,q}$: two edges are equivalent if they share a wall. The equivalence classes are $q$-valent homogeneous trees: they are the \textit{tree-walls} of the building.
\end{defn}

In practice, in order to construct a tree-wall, one can perform the following construction.

\begin{lem}\label{TreeWallLem}
A tree-wall is obtained by the following steps:
\begin{itemize}
\item Choose an edge in the 1-skeleton of $I_{p,q}$.  
\item Add to the tree-wall, on both sides of the chosen edge, the $q-1$ neighboring edges (distance one in the building) that are
at distance $2$ from this edge in the graph of the link. 
\item Repeat the process.
\end{itemize}
\end{lem}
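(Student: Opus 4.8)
The plan is to convert the metric definition of a wall into a purely combinatorial rule on the vertex links $K(q,q)$, and then observe that the recipe in the statement is nothing but the transitive closure of that rule.

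First I would record the local structure of walls. Since $I_{p,q}$ is assembled from right-angled hyperbolic $p$-gons ($p\geq 5$) glued so that every vertex link is $K(q,q)$ --- a graph of girth $4$, hence of systole $2\pi$ once its edges are assigned length $\pi/2$ --- it is a CAT$(-1)$ complex (this is part of Bourdon's analysis \cite{BourdonPaper}), so an edge-path in its $1$-skeleton is a geodesic of the building iff it is locally geodesic, i.e. subtends angle $\geq\pi$ at each of its vertices; accordingly ``wall'' should be read as a bi-infinite geodesic of $I_{p,q}$ that happens to lie in the $1$-skeleton, the reading that matches \cite{BourdonPaper}. At a vertex $y$ two edge-germs $e,e'$ subtend the angle $\tfrac{\pi}{2}\,d_{\mathrm{link}}(e,e')$, and since $\mathrm{link}(y)\cong K(q,q)$ is bipartite of diameter $2$, that angle is $\pi/2$ (germs at link-distance $1$, i.e. on opposite sides of the bipartition) or $\pi$ (germs at link-distance $2$, i.e. distinct and on the \emph{same} side). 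So a wall turns by exactly $\pi$ at every vertex, and this happens precisely when its two edges there are distinct and lie on the same side of the link's bipartition. I would also note the converse: any such same-side pair extends to a bi-infinite wall, because at each vertex there are $q-1\geq 1$ admissible continuations (the remaining same-side germs), the resulting bi-infinite locally geodesic path is a genuine CAT$(-1)$-geodesic by the local-to-global principle, and it lies in the $1$-skeleton by construction.

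Next I would use this dictionary to pin down the equivalence class. Two edges $e,f$ of the $1$-skeleton share a wall iff $f$ is reachable from $e$ by finitely many moves of the form ``pass to an edge sharing a vertex with the current one and lying at link-distance $2$ there'': if they share a wall $M$, the finite geodesic subsegment of $M$ between them realises such a sequence of moves by the local structure above, and conversely each single move keeps $e$ and $f$ on a common wall. The recipe in the statement --- add on both sides the $q-1$ same-side neighbours and iterate --- is exactly the transitive closure of this move, so the set of edges it produces is precisely the equivalence class of the initial edge, i.e. its tree-wall. To finish I would re-derive the $q$-valent homogeneous tree structure: at any vertex $y$ on the tree-wall the incident tree-wall edges are, by the local structure, all $q$ germs on one fixed side of $\mathrm{link}(y)\cong K(q,q)$ (one side is singled out by any single incident tree-wall edge, ``same side'' is an equivalence relation, and every germ on that side is reached by one move), so the subcomplex is $q$-regular; and it is acyclic because a cycle in it would be a closed edge-path turning by $\pi$ at every vertex, hence a closed geodesic, impossible in the simply connected CAT$(-1)$ complex $I_{p,q}$ (alternatively one cites \cite{BourdonPaper} directly).

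The main obstacle is the first step: establishing exactly which consecutive edge-pairs can occur along a wall. This forces one to feed in the nonpositive-curvature input --- Gromov's link condition for $I_{p,q}$ together with the local-to-global principle for geodesics --- rather than arguing in the graph metric of the $1$-skeleton, where a $\pi/2$-turn need not violate minimality locally; it is precisely the CAT$(-1)$ structure, combined with the fact that $K(q,q)$ has diameter $2$, that makes ``geodesic wall'' equivalent to ``same-side turn at every vertex''. Once that equivalence is in hand, everything else is bookkeeping inside $K(q,q)$.
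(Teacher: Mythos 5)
Your argument is correct and it proves strictly more than the paper does. The paper's own proof is a one-line link count: since $\mathrm{link}(y)\cong K(q,q)$, a given link vertex has exactly $q-1$ vertices diametrically opposed to it (at link distance $2$), which are the $q-1$ continuation edges of the recipe; the identification of ``wall continuation'' with ``distance-$2$ move in the link'' is taken as understood from Bourdon's description of $I_{p,q}$. You instead derive that dictionary from scratch: the Gromov link condition (girth $4$ in $K(q,q)$ with edge length $\pi/2$, so systole $2\pi$) gives the CAT$(-1)$ structure, the local-to-global principle turns ``bi-infinite geodesic in the $1$-skeleton'' into ``angle exactly $\pi$ at every vertex,'' and the bipartite diameter-$2$ structure of $K(q,q)$ translates that into ``distinct, same-side germs,'' after which the recipe is visibly the reachability (equivalence) class. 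That buys a self-contained proof that does not lean on \cite{BourdonPaper} for the local structure of walls, at the cost of invoking comparison-geometry machinery the paper never makes explicit. One small caveat in your optional re-derivation of the $q$-valent tree structure (not needed for the lemma itself): asserting that the tree-wall edges at a vertex $y$ are \emph{only} the $q$ germs on one side, and that any cycle in the tree-wall turns by $\pi$ at every vertex, requires first excluding that an opposite-side germ at $y$ enters the class via a roundabout chain of moves; this follows from your own framework (two equivalent edges meeting at $y$ lie on a common wall, which passes through $y$ once, forcing them to be consecutive and hence at link distance $2$), but you should say so rather than appeal only to ``same side is an equivalence relation.''
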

 
\begin{proof}
The link of $I_{p,q}$ is the bipartite graph $K(q,q)$, so in this bipartite graph, $q-1$ vertices (which correspond to $q-1$ edges of the building) are diametrically opposed to our edge (i.e.~at distance $2$ of it on the link's graph).

\end{proof}

\begin{rem}
Note that tree-walls cut the building (and hence its boundary) into $q$ connected components. These connected components will be suitable boundary regions to study entanglement entropy, and the bulk tree-walls will be the analogues of Ryu--Takayanagi surfaces for the Bourdon building. Hence we shall call these connected components entanglement wedges.
\end{rem}

There is another way to look at the tree-wall construction, in terms of the Coxeter system of the building. 

\begin{lem}\label{WedgeLem1}
Consider a given chamber $C$ of the Bourdon building, and pick one of its edges, say $E$. The entanglement wedge with tree-wall boundary
associated to the choice of $C$ and $E$ is obtained by considering, in all apartments containing $C$, 
the portion that is on the same side as $C$ with respect to the hyperplane determined by $E$. 
\end{lem}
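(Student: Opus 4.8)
The plan is to prove that the two descriptions of the wedge agree by working apartment-by-apartment and then patching, using the building axioms and a retraction. Fix $C$ and its edge $E$, let $T$ be the tree-wall through $E$, and write $\mathcal{W}$ for the connected component of $I_{p,q}\setminus T$ containing $C$; by the Remark preceding the statement this $\mathcal{W}$ is the entanglement wedge associated to $(C,E)$. For any apartment $A$ containing $C$ — equivalently containing $E$, since $E\subset C$ — recall that $A$ is a tessellation of $\mathbb{H}^2$ by right-angled $p$-gons in which $E$ lies on a unique \emph{wall} $w_A(E)$, the bi-infinite geodesic in the $1$-skeleton of $A$ through $E$; it is the fixed-point set of the reflection of $(W,S)$ associated to $E$ and it cuts $A$ into two half-apartments. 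Let $A^+_E$ be the one containing $C$. The claim to prove is then $\mathcal{W}=\bigcup_{A\ni C}A^+_E$.

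First I would identify the tree-wall with the union of these apartment walls, $T=\bigcup_{A\ni C}w_A(E)$. The inclusion ``$\supseteq$'' is immediate: each $w_A(E)$ is a bi-infinite geodesic in the $1$-skeleton through $E$, so every edge on it is equivalent to $E$ and hence lies in $T$. For ``$\subseteq$'', I would observe that the local rule of Lemma~\ref{TreeWallLem} — at a vertex, pass from the incoming edge to the $q-1$ edges diametrically opposite in the link $K(q,q)$ — restricts, inside any fixed apartment $A$ through that vertex (whose link there is a $4$-cycle inside $K(q,q)$), to the single ``go straight across the vertex'' rule that defines $w_A(E)$; since any edge of $T$ is reached from $E$ by finitely many such steps, an induction on the number of steps, together with the axiom that any two chambers lie in a common apartment, shows that each edge of $T$ lies on $w_A(E)$ for some $A\ni C$. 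In particular $T\cap A=w_A(E)$ for every apartment $A\ni C$, because two $T$-equivalent edges of the hyperbolic tessellation $A$ must lie on a common geodesic of its $1$-skeleton, which can only be $w_A(E)$.

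Next, the inclusion $\bigcup_{A\ni C}A^+_E\subseteq\mathcal{W}$. Each half-apartment $A^+_E$ is a convex subcomplex of $A$ containing $C$, so there is a gallery from $C$ to any chamber $D\in A^+_E$ that stays inside $A^+_E$; such a gallery crosses only edges of $A$ interior to the half-apartment, none of which lie on $w_A(E)=T\cap A$, so it avoids $T$ and hence $D\in\mathcal{W}$. Connectedness of $\bigcup_{A\ni C}A^+_E$ is automatic since every $A^+_E$ contains $C$. For the reverse inclusion $\mathcal{W}\subseteq\bigcup_{A\ni C}A^+_E$, let $D$ be a chamber of $\mathcal{W}$ and choose, by the building axiom, an apartment $A$ with $C,D\in A$; then $D$ lies in $A^+_E$ or in the opposite half $A^-_E$. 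Suppose $D\in A^-_E$. Let $\rho=\rho_{A,C}\colon I_{p,q}\to A$ be the retraction onto $A$ centered at $C$ (see~\cite{Thomas:2016}): it fixes $A$ pointwise and sends chambers adjacent across a panel $P$ to chambers that are equal or adjacent across $\rho(P)$. A gallery from $C$ to $D$ in $I_{p,q}\setminus T$, which exists because $D\in\mathcal{W}$, is therefore mapped to a gallery from $C$ to $D$ inside $A$; since $C\in A^+_E$ and $D\in A^-_E$ this image gallery crosses $w_A(E)$, so some crossed panel $P_0$ of the original gallery satisfies $\rho(P_0)\subset w_A(E)$, whence $P_0$ sits on an edge $T$-equivalent to $E$, i.e.\ on $T$ — contradicting that the gallery avoids $T$. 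Hence $D\in A^+_E$, which proves the lemma.

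The step I expect to be the main obstacle is the last identification in the reverse inclusion: that after retraction, crossing $w_A(E)$ forces the original gallery to cross $T$. This is precisely the point where the apartment-local notion of ``hyperplane determined by $E$'' must be globalized to the whole building, and it rests on $T\cap A=w_A(E)$ from the first step together with the compatibility of $\rho$ with the panel structure. An alternative, if a fully self-contained argument there is awkward, is to invoke Bourdon's result~\cite{BourdonPaper} that $I_{p,q}\setminus T$ has exactly $q$ connected components and then simply check that $\bigcup_{A\ni C}A^+_E$ is a nonempty, $T$-avoiding subset that is both open and closed in $I_{p,q}\setminus T$ — hence a full component, and since it contains $C$ it must be $\mathcal{W}$.
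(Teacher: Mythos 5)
Your route (identify the wedge with the component of $I_{p,q}\setminus T$, get $\bigcup_{A\ni C}A^+_E\subseteq\mathcal{W}$ from convexity of half-apartments, and get the reverse inclusion by retracting a $T$-avoiding gallery onto $A$) is more detailed than the paper's own argument, which simply asserts apartment by apartment that the wedge meets each $A\ni C$ in the half-apartment bounded by the wall through $E$. However, the step you yourself flag is a genuine gap, and the justification you propose for it does not work. From ``$\rho(P_0)\subset w_A(E)$'' you conclude that $P_0$ lies on $T$, claiming this rests on $T\cap A=w_A(E)$ plus compatibility of $\rho$ with panels. But $P_0$ is in general \emph{not} a panel of $A$, so $T\cap A=w_A(E)$ says nothing about it; and the retraction $\rho_{A,C}$ only preserves types and Weyl distances from $C$ --- it does not transport tree-wall membership backwards along $\rho$. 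What you actually need is: if $w=\delta(C,C_{i-1})$, the crossing has type $s$, the Weyl distance changes, and the retracted crossing lies on $w_A(E)$ (equivalently $wsw^{-1}=t$, the generator attached to $E$), then the crossed panel lies on $T$. This is true, but it needs a separate argument, for instance: in a right-angled Coxeter group distinct generators are never conjugate (look at the abelianization), so $wsw^{-1}=t$ forces $s=t$ and $w\in Z_W(t)=W_{\{t\}\cup t^{\perp}}$ with $t^{\perp}=\{s'\,:\,m_{s't}=2\}$; then $\delta(C,C_{i-1})\in W_{\{t\}\cup t^{\perp}}$ places $C_{i-1}$ in the $\{t\}\cup t^{\perp}$-residue of $C$, and the $t$-panels of that residue are exactly the tree-wall through $E$ (Bourdon's description, equivalent to the link rule of Lemma~\ref{TreeWallLem}). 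Without something of this kind the reverse inclusion --- the heart of the lemma --- is not established.

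Two secondary points. First, your step ``$T\subseteq\bigcup_{A\ni C}w_A(E)$'' and the claim $T\cap A=w_A(E)$ are only sketched: the induction needs the standard root-extension lemma (a half-apartment together with a chamber attached along a boundary panel lies in a common apartment), since an arbitrary apartment containing $C$ and a chamber at the new edge need not have its $t$-wall passing through that edge; and ``two $T$-equivalent edges of $A$ lie on a common geodesic of $A$'' is not immediate from the definition, because the wall witnessing the equivalence may leave $A$. Second, the proposed fallback does not evade the main issue: to see that $\bigcup_{A\ni C}A^+_E$ is closed in $I_{p,q}\setminus T$ you must show that a chamber adjacent to it across a panel not on $T$ again lies in some $A'^{+}_E$ with $C\subset A'$, which is essentially the same globalization step you skipped. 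The Weyl-distance/residue argument above (or a direct appeal to Bourdon's treatment of tree-walls and their wings in \cite{BourdonPaper}) closes all of these at once and would make your proof complete.
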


\begin{proof}
In any apartment containing $C$, $E$ defines a reflection with respect to a given hyperplane of this apartment. The intersection of the entanglement wedge defined by $C$ and $E$ with the apartment then corresponds to the portion of the apartment on the same side of the hyperplane as $C$. By applying this construction to all apartments containing $C$, we recover the same entanglement wedge, with a tree-wall boundary.
\end{proof}

The following procedure describes another geometric method for the construction of valid entanglement wedges in $I_{p,q}$.

\begin{lem}\label{WedgeLem2}
The following construction defines a valid entanglement wedge in $I_{p,q}$.
\begin{itemize}
\item Choose a vertex $V$ in the building. 
\item Look at the link around the chosen vertex and pick an edge in this link, which corresponds to a chamber $C$ in $I_{p,q}$. 
\item In a given apartment $A$, consider the set $S(C,V,A)$ of tiles defined by the chamber $C$ and the vertex $V$ as the set of tiles containing $C$ and delimited by the two hyperplanes which are edges of the chamber and intersect at $V$.
\item Define the entanglement wedge associated to $C$ and $V$ as the union of all $S(C,V,A)$, for $A$ containing $C$. 
\end{itemize}
In this case, the entanglement wedge is delimited by two half-tree-walls attached to each other.
\end{lem}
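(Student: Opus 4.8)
The plan is to reduce the statement to a computation inside a single apartment and then globalise it using the branching description of tree-walls from Lemma~\ref{TreeWallLem}. Fix the chamber $C$, a vertex $V$ of $C$, and recall that in the link of $I_{p,q}$ at $V$ — the bipartite graph $K(q,q)$ — the chosen edge of the link corresponds to the chamber $C$, and that $C$, being a $p$-gon, has precisely two edges, say $E$ and $E'$, incident to $V$. These data depend only on the pair $(C,V)$, so $E$ and $E'$ are the same in every apartment containing $C$. Since any two chambers of $I_{p,q}$ lie in a common apartment and any two apartments are related by an isometry fixing their intersection, the region $S(C,V):=\bigcup_{A\supseteq C}S(C,V,A)$ is well defined, and it suffices to describe each $S(C,V,A)$ and then see how they assemble.

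First I would work in a fixed apartment $A$, which is a right-angled tiling of $\mathbb{H}^2$ by $p$-gons (the Coxeter relations force adjacent edges to be orthogonal, and $p\ge 5$ makes this hyperbolic). In such a tiling every edge extends to a bi-infinite geodesic contained in the $1$-skeleton — a wall — because reflections in edges are tiling isometries. Let $\ell_E,\ell_{E'}\subset A$ be the walls through $E$ and $E'$. Being distinct geodesics of $\mathbb{H}^2$ they meet only at $V$ and cut $A$ into four quadrants, and since the two edges of $C$ at $V$ are precisely $E$ and $E'$, the chamber $C$ fills the corner of exactly one of them. By definition $S(C,V,A)$ consists of the tiles of $A$ in that closed quadrant — those on the $C$-side of both $\ell_E$ and $\ell_{E'}$ — and its boundary inside $A$ is the union $\rho_E^A\cup\rho_{E'}^A$ of the two geodesic rays of $\ell_E$ and $\ell_{E'}$ issuing from $V$ through $E$ and through $E'$. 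Equivalently, $S(C,V,A)$ is the intersection inside $A$ of the two tree-wall half-spaces of Lemma~\ref{WedgeLem1} for $(C,E)$ and for $(C,E')$, which I denote $W_E$ and $W_{E'}$; in particular $S(C,V)\subseteq W_E\cap W_{E'}$.

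Next I would globalise. Taking the union over all apartments $A\supseteq C$, the ray $\rho_E^A$ issues from $V$ into the $E$-side and never returns across $V$, so $\bigcup_A\rho_E^A$ is contained in the component of the tree-wall $T_E$ through $E$ obtained by cutting $T_E$ at the tree-vertex $V$; conversely, by the branching step of Lemma~\ref{TreeWallLem} every edge of that component is reached from $E$ by repeatedly adjoining one of the $q-1$ diametrically opposite link-edges at each vertex, and each such finite branch extends to a wall $\ell_E^A$ in some apartment $A\supseteq C$. Hence $\bigcup_A\rho_E^A$ is exactly the half-tree-wall $T_E^{+}$ based at $V$, and likewise $\bigcup_A\rho_{E'}^A=T_{E'}^{+}$. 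A short convexity argument — apartments are convex, so the global tree-wall $T_E$ restricts in each apartment to the single wall $\ell_E^A$ — then shows that the cut links of $S(C,V)$, i.e. the internal edges joining a tile in $S(C,V)$ to a tile outside it, are precisely the edges of $T_E^{+}\cup T_{E'}^{+}$: two half-tree-walls sharing the vertex $V$, i.e. ``attached to each other''. Finally, validity of $S(C,V)$ as an entanglement wedge (complementary recovery, hence the Ryu--Takayanagi formula) follows exactly as for the tree-wall wedges of Lemma~\ref{WedgeLem1}: inside each apartment the bounding rays are walls of the $\{p,4\}$-tiling, so the HaPPY greedy-algorithm and isometry argument runs on each apartment and these glue.

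I expect the globalisation to be the main obstacle, and it has two parts. First, one must check that any finite segment of a wall-branch, together with $C$, lies in a common apartment, so that every branch of $T_E^{+}$ is actually realised and $\bigcup_A\rho_E^A$ is not strictly smaller than $T_E^{+}$. Second, one must rule out extra cut links away from $T_E^{+}\cup T_{E'}^{+}$ — equivalently, show that $S(C,V)$ genuinely equals $W_E\cap W_{E'}$ rather than merely sitting inside it. Both reduce to the convexity of apartments in a building, together with the fact that $\rho_E^A$ and $\rho_{E'}^A$ emanate from the single vertex $V$ rather than from disjoint walls; once these are in place, the identification of the boundary with $T_E^{+}\cup T_{E'}^{+}$ and the validity statement are routine.
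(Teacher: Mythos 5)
Your proposal is sound and in fact more detailed than what the paper itself provides: the paper states Lemma~\ref{WedgeLem2} without any proof, treating the quadrant-in-an-apartment picture (cf.\ Figure~\ref{fig:wedges}) and the $K(q,q)$ link structure as self-evident, and deferring the ``validity'' part (complementary recovery) to Proposition~\ref{RecoveryProp} together with Remark~\ref{ball}. Your per-apartment analysis --- the two edges of $C$ at $V$ extend to perpendicular walls of the $\{p,4\}$ apartment, $S(C,V,A)$ is the quadrant containing $C$, and its boundary in $A$ is the pair of rays issuing from $V$ --- is exactly the intended picture, and your globalisation via the branching step of Lemma~\ref{TreeWallLem} correctly identifies the delimiting set as the two half-tree-walls of $T_E$ and $T_{E'}$ attached at $V$. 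The one caveat is that the two steps you flag are precisely where building theory must enter, and ``convexity of apartments'' alone does not close them: to realize every branch of the half-tree-wall you need that any finite straight gallery along the wall beginning with $E$ lies, together with $C$, in a common apartment; and to exclude extra cut links you need that any chamber sharing an interior edge of some quadrant with a wedge chamber is itself contained in the quadrant of an apartment through $C$. Both follow from the building axioms (any two chambers lie in a common apartment, and any two apartments are related by an isometry fixing their intersection), or more directly from the transitivity of the stabilizer of a chamber on the set of apartments containing it --- the very property the paper later isolates as the third hypothesis of Theorem~\ref{nHypThm} --- so you should either spell this out or cite it; as written these points are asserted rather than proved. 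Finally, your appeal to the tree-wall case for complementary recovery is exactly how the paper handles validity in Proposition~\ref{RecoveryProp}, so there is no divergence on that front.
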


Figure \ref{fig:wedges} shows the intersection of the two types of entanglement wedges described in this section with an apartment of the Bourdon building.

\begin{figure}
    \centering
    \includegraphics[scale=0.7]{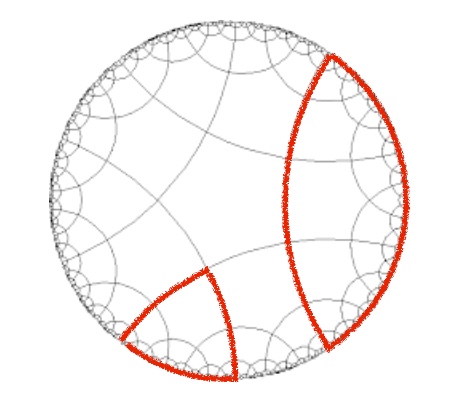}
    \caption{The intersections of the two types of entanglement wedges described here with an apartment of the Bourdon building. Each geodesic should be seen as a portion of a tree-wall in the full building.}
    \label{fig:wedges}
\end{figure}

\subsection{Complementary recovery and a Ryu--Takayanagi formula}

We now show that for an entanglement wedge defined in one of the previous manners, complementary recovery is satisfied in the Bourdon tensor network. 

\begin{prop}\label{RecoveryProp}
Complementary recovery holds for an entanglement wedge in $I_{p,q}$ constructed as in Lemmas \ref{TreeWallLem} and \ref{WedgeLem2}.
\end{prop}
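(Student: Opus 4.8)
The plan is to identify the boundary region $a:=\partial W$ attached to the bulk wedge $W$ in question (so $a^{c}=\partial W^{c}$), and to show that the greedy algorithm launched from $a$ absorbs exactly the chambers of $W$ while the one launched from $a^{c}$ absorbs exactly the chambers of $W^{c}$. Since the bounding tree-wall $T$ (resp., for a Lemma~\ref{WedgeLem2} wedge, the pair of half-tree-walls joined at $V$) cuts $I_{p,q}$ into components that together exhaust all chambers, and since complementary recovery is by definition the statement that the greedy algorithm reaches complementary bulk regions from $a$ and from $a^{c}$, this is exactly what must be proved; the common greedy surface is then $T$, the two wedges partition the bulk dangling legs, and the Ryu--Takayanagi statement for these regions follows as usual. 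I will use throughout that, by Lemmas~\ref{WedgeLem1} and~\ref{WedgeLem2}, $W$ is the union over all apartments $A$ through the chosen chamber of the half-apartments $A^{+}$ (resp.\ the regions $S(C,V,A)$), each of which is one side (resp.\ an intersection of two sides) of a genuine bi-infinite geodesic of $\mathbb{H}^{2}$ in the $p$-gon tessellation carried by $A$.

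First I would prove the easy containment: the greedy wedge $\mathcal{E}[a]$ is contained in $W$ (and symmetrically $\mathcal{E}[a^{c}]\subseteq W^{c}$), i.e.\ the greedy algorithm from $a$ never crosses $T$. Order the chambers by the step at which they are absorbed and suppose $C'\in W^{c}$ were the first chamber of $W^{c}$ to be absorbed. By Definition~\ref{BourdonDef} any edge of $C'$ not on $T$ joins $C'$ only to chambers in its own component of $I_{p,q}\setminus T$, hence in $W^{c}$, none yet absorbed; and since $C'\in W^{c}$, none of its boundary legs lies in $a$. So the only legs of $C'$ facing the current greedy region pass through its edges on $T$, and at each such edge exactly one of the $q$ incident chambers lies in $W$ and contributes a single leg. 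As a chamber meets a given tree-wall in at most $\tfrac{p-1}{2}$ of its edges -- a feature of tree-walls in $I_{p,q}$, and the point where $p\geq 5$ enters -- the number of legs of $C'$ facing the greedy region is at most $\tfrac{p-1}{2}$, far below half of the $(p+1)(q-1)$ legs of the tensor once $q\geq 3$. Hence $C'$ cannot be absorbed, a contradiction. It is precisely the failure of this estimate for boundary regions not adapted to a tree-wall that explains why complementary recovery is special to these wedges, and the branching $q\geq 3$ is what makes the count comfortable here.

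The main work, and the step I expect to be the chief obstacle, is the reverse containment: the greedy algorithm from $a$ absorbs \emph{all} of $W$. I would reduce this, exactly as in the isometry theorem above, to exhibiting an acyclic orientation of the sub-network supported on the chambers of $W$ in which every dangling leg of $W$ and every leg crossing $T$ is incoming, every leg landing in $a$ is outgoing, and every tensor has at least as many outgoing as incoming legs; peeling the tensors off along any linear extension of such an orientation then shows, as for the pentagon code in \cite{Pastawski:2015qua}, that $\mathcal{E}[a]$ contains every chamber of $W$. The orientation is built layer by layer, sweeping inward from the cutoff $\Lambda$ toward $T$: inside each half-apartment $A^{+}$ the outward-facing chambers outnumber the inward-facing ones (the geodesic $T\cap A$ is ``straight'' and the HaPPY layering of $A$ restricts to it), so each tensor can discharge its dangling legs outward, and the only extra burden -- at the chambers of $W$ meeting $T$, whose legs crossing $T$ must all be counted incoming -- is respected precisely because such a chamber has at most $\tfrac{p-1}{2}$ of its $p$ edges on $T$. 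The genuinely delicate point is that absorbing a chamber of $W$ requires its neighbours \emph{in every apartment containing it} to have been absorbed first, so the peeling must be done simultaneously over the whole tree-indexed family of half-apartments comprising $W$; checking that the partial orientations on overlapping apartments are mutually consistent and that the counts survive at the branch vertices of $T$ is the heart of the argument.

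Finally, the two containments give $\mathcal{E}[a]=W$ and, by the same reasoning applied to $a^{c}$, $\mathcal{E}[a^{c}]=W^{c}$; since these exhaust all chambers and share the surface $T$, complementary recovery holds. For the wedges of Lemma~\ref{WedgeLem2} the argument is identical once one observes that each $S(C,V,A)$ is carved from its apartment by two geodesics of $\mathbb{H}^{2}$ meeting at $V$, that the pair of half-tree-walls separates $I_{p,q}$ into the two required components, and that a chamber abutting the corner $V$ still carries few enough of its edges on the union of the two half-tree-walls for the counts in both containments to go through unchanged.
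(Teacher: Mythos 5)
Your overall strategy mirrors the paper's: sweep inward from the cutoff, reduce the greedy absorption of the wedge to the known HaPPY counting inside a single apartment, and recover the complement by viewing it as a union of wedges of the same type absorbed from their own boundary pieces. But there is a genuine gap precisely where you locate ``the heart of the argument'': you never verify that the per-apartment count is consistent across the branching, i.e.\ that a chamber of $W$, which lies in infinitely many apartments and is glued along each edge to $q-1$ chambers rather than one, can actually be absorbed once its outer neighbours are. As written, the reduction to \cite{Pastawski:2015qua} is therefore incomplete. The paper closes this point with a single observation you are missing: layers are measured by distance to the central chamber, and by the symmetry of the branching all $q-1$ chambers attached across a given edge lie at the \emph{same} layer. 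Hence if, inside one apartment, a chamber at layer $n$ touches at least as many chambers at layer $n+1$ as at layers $n$ or $n-1$ (the HaPPY count), the same inequality holds in the full building --- each outgoing edge simply contributes $q-1$ already-absorbed neighbours instead of one --- so the inward sweep goes through with no residual consistency condition on overlapping apartments or at branch vertices of the tree-wall. Supplying that observation (or an equivalent one) is what your proposal still needs.

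Two secondary remarks. First, your ``easy containment'' $\mathcal{E}[a]\subseteq W$ is not required for the notion of complementary recovery used here, which only asks that the greedy algorithm reach the delimiting surface from both $a$ and $a^{c}$ (non-overlap of the two greedy wedges at the level of bulk legs then follows from the isometry of the full network); moreover its key input, that a chamber meets a given tree-wall in at most $\tfrac{p-1}{2}$ of its edges, is asserted rather than proved. In fact the sharper statement holds: a tree-wall meets any apartment containing the chamber in a single wall (cf.\ Lemma~\ref{WedgeLem1} and Figure~\ref{fig:I53}), and a bi-infinite geodesic contains at most one edge of a convex right-angled $p$-gon, so a chamber has at most one edge on a given tree-wall; this has nothing to do with $p\geq 5$, which enters only through hyperbolicity and the HaPPY layer count. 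Second, for the wedges of Lemma~\ref{WedgeLem2} the two half-tree-walls do not split $I_{p,q}$ into just two components: the corner wedges attached to the edges of the link at $V$ form a partition of the bulk, and the complement of your wedge is the union of the remaining ones, to each of which the absorption argument is applied separately --- this is how the paper treats that case.
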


\begin{proof}
By symmetry, we will assume without loss of generality, both in the case defined by the intersection of two tree-walls and the case defined by a single tree-wall, that the considered entanglement wedges are on the opposite side of the tree-walls from the center of the building. These regions can be identified with regions in the network that are spanned by semi-infinite geodesics that start from the center and pass through a given chamber $C$ (see Remark \ref{ball}). As shown in \cite{Pastawski:2015qua}, complementary recovery amounts to showing that the greedy algorithm reaches the surface which delimits the entanglement wedge, both starting from the boundary of the entanglement wedge (i.e.~the subset of the Gromov boundary of $I_{p,q}$ that can be reached by geodesics inside the entanglement wedge), and its complement. Then, our proof amounts to showing that at layer $n$, the chambers inside the entanglement wedge share an edge with at least as many chambers at layer $n+1$ as at layer $n$ or $n-1$. This comes from the apartment structure of the Bourdon building: each chamber can be included in an apartment which is isomorphic to a regular tiling of the hyperbolic disk, and for which each tile at layer $n$ is in contact with at least as many at layer $n+1$ as at layer $n$ or $n-1$. By symmetry of the branching, the other tiles sharing the same edge will also be at layer $n+1$, which finishes the proof of the fact that it is possible to reconstruct the entanglement wedge on its boundary.

Now, it is also possible to reconstruct the complement of the entanglement wedge on the complementary boundary region. In the case in which the entanglement wedge corresponds to a region delimited by a tree-wall, the $q$ connected components are symmetric with respect to the tree-wall, and thus it is possible to achieve reconstruction on the $q-1$ connected components by symmetry.

Similarly, edges in the link (corresponding to chambers in the building) around a given vertex have entanglement wedges that form a partition of the bulk, and satisfy recovery. Thus, 
the complement of the entanglement wedge defined by a given chamber and a link vertex also satisfies recovery.
\end{proof}

\begin{prop}
The Ryu--Takayanagi formula holds for entanglement wedges in $I_{p,q}$ constructed as in Lemmas \ref{TreeWallLem} and \ref{WedgeLem2}.
\end{prop}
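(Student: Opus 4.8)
The plan is to derive the Ryu--Takayanagi formula from complementary recovery (Proposition~\ref{RecoveryProp}) together with the standard counting argument of \cite{Pastawski:2015qua}; the only genuinely new input is the identification of the greedy surface with the tree-wall (respectively, the union of two half-tree-walls), which is already essentially contained in the proof of Proposition~\ref{RecoveryProp}. First I would recall the general mechanism for perfect-tensor networks that are isometric at each layer: given a boundary region $A$ with complement $A^c$, run the greedy algorithm from $A$ to produce a bulk region $\mathcal{W}_A$ bounded by a set $\gamma_A$ of internal legs, and from $A^c$ to produce $\mathcal{W}_{A^c}$ bounded by $\gamma_{A^c}$. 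Complementary recovery is exactly the statement that $\mathcal{W}_A$ and $\mathcal{W}_{A^c}$ together cover the (cutoff) building and that $\gamma_A=\gamma_{A^c}$. When this holds, the encoding isometry $V$ factorizes through the legs on $\gamma_A$: it can be written as the composition of an isometry $V_A\otimes V_{A^c}$ acting on the bulk dangling legs inside $\mathcal{W}_A$, respectively $\mathcal{W}_{A^c}$, together with $|\gamma_A|$ maximally entangled pairs placed across $\gamma_A$, mapped into $A\otimes A^c$. Tracing out $A^c$ for a code state $\rho$ then yields $S(\rho_A)=S_{\mathrm{bulk}}(\rho_{\mathcal{W}_A})+|\gamma_A|\log D$, with $D$ the bond dimension and $|\gamma_A|$ the number of internal legs the surface cuts --- this is the quantum Ryu--Takayanagi formula (see \cite{Pastawski:2015qua, Harlow:2016vwg}).

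Second, I would identify $\gamma_A$ with the tree-wall. For a wedge $\mathcal{W}$ defined as in Lemma~\ref{TreeWallLem} or Lemma~\ref{WedgeLem2}, the proof of Proposition~\ref{RecoveryProp} already shows that the greedy algorithm run from the boundary subtended by $\mathcal{W}$ advances all the way to the bounding tree-wall: within each apartment the layer-counting argument (each chamber at layer $n$ touching at least as many chambers at layer $n{+}1$ as at layers $n$ or $n{-}1$) lets the greedy region absorb chambers up to the wall, and the branching symmetry of $I_{p,q}$ propagates this across every apartment containing $C$; the same argument from the complementary boundary absorbs the remaining $q-1$ connected components up to the same wall. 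Hence the two greedy surfaces are both equal to the set of internal legs dual to the edges of the tree-wall lying within the cutoff distance $\Lambda$ of the center, which establishes both $\gamma_A=\gamma_{A^c}$ and the identification of this surface with the bulk tree-wall (for the Lemma~\ref{WedgeLem2} wedges, $\gamma_A$ is the union of the two half-tree-walls meeting at the chosen link vertex). Combining with the first step, $S(\rho_A)=S_{\mathrm{bulk}}(\rho_{\mathcal{W}_A})+N_\Lambda\log D$, where $N_\Lambda$ is the number of tree-wall edges within distance $\Lambda$ of the center.

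The main obstacle is not the entropy bookkeeping, which is verbatim the same as for the HaPPY code once complementary recovery is in hand, but rather making sure that the greedy surface is \emph{exactly} the tree-wall and not a proper subcomplex strictly interior to it: because every edge of $I_{p,q}$ is shared by $q$ chambers, one must check that the branching never distributes neighbouring chambers among the layers in a way that stalls the perfect-tensor pushing step. This is precisely what the apartment-plus-branching-symmetry argument of Proposition~\ref{RecoveryProp} controls, so the step amounts to quoting that proof carefully; the one point beyond it is noting that no strictly smaller cut can exist, which follows because the greedy wedges from the two sides already exhaust the bulk and hence leave no room for a competing surface. I would also note, as a bookkeeping remark, that in the cutoff network $\gamma_A$ should be read as the finite truncation of the tree-wall to radius $\Lambda$, so that $N_\Lambda$ is finite and the formula is well defined; the scaling of $N_\Lambda$ with $\Lambda$ (logarithmic when $\partial I_{p,q}$ is one-dimensional, a power law governed by the Hausdorff dimension of the Menger sponge otherwise) is then a separate computation carried out afterwards.
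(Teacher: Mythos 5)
Your proposal is correct and follows the same route as the paper: the paper's proof likewise invokes complementary recovery (Proposition~\ref{RecoveryProp}) and then cites the standard perfect-tensor counting argument of \cite{Pastawski:2015qua} to conclude $S_{phys}(\rho)=N_{cut}\log d+S_{code}(\rho)$. Your additional elaboration on the factorization of the encoding isometry and on identifying the greedy surface with the (truncated) tree-wall is consistent with, and essentially already contained in, the paper's treatment of the recovery proposition.
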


\begin{proof}
Given the previous Proposition~\ref{RecoveryProp}, the Ryu--Takayanagi formula directly follows from the argument of \cite{Pastawski:2015qua}: 
since the entanglement wedge satisfies complementary recovery, the entanglement entropy of a boundary state will satisfy 
$$S_{phys}(\rho)=N_{cut}\,\mathrm{log}\,d+S_{code}(\rho),$$where $N_{cut}$ is the number of links that are cut by the boundary of the entanglement wedge and $d$ is the bond dimension. 
\end{proof}

\subsubsection{Ball entanglement wedges}

We now turn our attention to entanglement wedges that correspond to ball-shaped regions on the boundary. Their construction goes as follows.

\begin{lem}\label{BallWedgeLem}
The following procedure gives an entanglement wedge construction in $I_{p,q}$.
\begin{itemize}
\item Fix a chamber $C$ in $I_{p,q}$ and the central chamber $O$. 
\item The ``ball entanglement wedge" defined by $C$ and $O$ is then given by the set of semi-infinite geodesics on the tensor network starting from $C$ that can be extended to a semi-infinite geodesic starting from $O$. 
\end{itemize}
\end{lem}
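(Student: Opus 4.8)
The goal is to show that $\mathcal{W}(C,O)$, the region defined in the statement, and its complement both satisfy complementary recovery --- so that the Ryu--Takayanagi formula applies --- and to check that the associated boundary region is a metric ball for the visual metric, which is what justifies the name. There are two essentially equivalent routes. Route (a): identify $\mathcal{W}(C,O)$ with one of the tree-wall or half-tree-wall wedges already treated in Lemmas~\ref{TreeWallLem}--\ref{WedgeLem2} (this is the content of Remark~\ref{ball}) and quote Proposition~\ref{RecoveryProp}. Route (b): rerun the greedy-algorithm argument of Proposition~\ref{RecoveryProp} directly for $\mathcal{W}(C,O)$. I would carry out (a) and use (b) as a fallback.

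For route (a), first fix an apartment $A$ of $I_{p,q}$ containing both $O$ and $C$ (it exists by the building axiom) and let $\sigma$ be a geodesic from $O$ to $C$ inside $A$; since apartments are convex, $\sigma$ is also a geodesic of the whole building. In $A$, which is a copy of the $p$-gon HaPPY tiling of $\mathbb{H}^2$, the geodesic $\sigma$ enters $C$ across a single edge $E_0$, and the semi-infinite geodesics of $A$ that extend $\sigma$ past $C$ are exactly those leaving $C$ across the edges ``opposite'' to $E_0$; these rays sweep out a cone whose two extremal rays are again bi-infinite geodesics of $A$, i.e.\ walls. Hence $\mathcal{W}(C,O)\cap A$ is the slice of $A$ cut out by those two walls on the side of $C$ away from $O$, which is precisely the apartment picture of a tree-wall wedge --- or of two half-tree-walls meeting near $C$ --- as in Lemmas~\ref{WedgeLem1} and~\ref{WedgeLem2}, equivalently the entanglement wedge of the boundary interval that is the ``shadow'' of $C$ seen from $O$. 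Taking the union over all apartments containing $C$, exactly as in those lemmas, glues the slices into a region of $I_{p,q}$ bounded by (half-)tree-walls; Proposition~\ref{RecoveryProp} then applies verbatim and yields complementary recovery for $\mathcal{W}(C,O)$ and its complement, hence the Ryu--Takayanagi formula. Finally, the boundary at infinity of $\mathcal{W}(C,O)$ is $\{\xi\in\partial I_{p,q}: \text{the geodesic ray from } O \text{ to } \xi \text{ passes through } C\}$, and by the visual-metric estimate~\eqref{distGr} this set is comparable to a metric ball of radius of order $a^{-d(O,C)}$ about any boundary point lying behind $C$, which is exactly the input needed for the entropy scaling discussed below.

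The step I expect to be the main obstacle is the combinatorial analysis near $C$: one must pin down which edges of $C$ the geodesic extensions of $\sigma$ actually use (this is where the parity of $p$ enters, through the shape of the ``opposite'' edges of a regular $p$-gon), check that the two extremal extending rays are genuine geodesics rather than merely locally broken paths, and verify that this description is consistent across \emph{all} apartments containing $C$, so that the glued boundary really is a union of (half-)tree-walls and Lemmas~\ref{WedgeLem1},~\ref{WedgeLem2} and Proposition~\ref{RecoveryProp} apply on the nose. Should the region fail to be literally of that form, route (b) finishes the proof in any case: apartment by apartment, every chamber of $\mathcal{W}(C,O)$ at layer $n$ is adjacent to at least as many chambers at layer $n+1$ as at layer $n$ or $n-1$ (negative curvature of the $p$-gon tiling); the $(q-1)$-fold branching symmetry lifts this to the full building, so the greedy algorithm started on the boundary ball reaches the bounding (half-)tree-walls; and the same estimate run ``inward'' from the complementary boundary towards $C$ takes care of the complement.
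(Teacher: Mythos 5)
Your route (a) is exactly how the paper itself handles this lemma: it gives no separate proof, but justifies it in Remark~\ref{ball} by observing that ball entanglement wedges are particular cases of the wedges of Subsection~\ref{WedgeSec} (defined by a chamber and either an edge or a vertex, depending on where the tile sits), so that Proposition~\ref{RecoveryProp} applies. Your apartment-by-apartment identification of the cone of extending geodesics with a (half-)tree-wall wedge, and your fallback greedy-algorithm argument, fill in the same picture in more detail and are consistent with the paper's treatment.
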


\begin{rem}
\label{ball}
Note that ball entanglement wedges are particular cases of the entanglement wedges defined in Subsection~\ref{WedgeSec}, and can be defined by a chamber and either an edge or a vertex depending on where the tile is in the building. In particular, the wedges we use in the proof of Theorem \ref{RecoveryProp} can be described in terms of ball entanglement wedges.
\end{rem}

There is a nice way to estimate $N_{cut}$ in terms of the radius of the ball which is the boundary of our entanglement wedge. 

\begin{prop}
Let $\Lambda$ be the layer of the network at which the code is cut off. Consider a bulk ball entanglement wedge $R$ starting at a point $z$, and let $g$ be the distance of the point $z$ to $O$ (note that $g$ is the Gromov product of any two ends of the boundary of the entanglement wedge on the tensor network.) Then, if $q\geq 3$, the number of tile edges in $R$ at layer less than $\Lambda$ which are contained in $\partial R$ equals $$N_{tiles}=\frac{2}{q-2}((q-1)^{\Lambda-g+1}-1)-1$$ if the entanglement wedge is delimited by a tree-wall, and $$N_{tiles}=\frac{2}{q-2}((q-1)^{\Lambda-g+1}-1)$$ if the entanglement wedge is delimited by two half-tree-walls.
\end{prop}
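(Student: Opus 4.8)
The plan is to reduce the statement to the summation of a geometric series of ratio $q-1$ (the branching number of the building), by combining the explicit description of tree-walls in Lemma~\ref{TreeWallLem} with the fact, recorded in Remark~\ref{ball}, that a ball entanglement wedge is a ``radial'' region, so that moving outward along its bounding tree-wall advances the building layer by exactly one at each step.

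First I would set up the combinatorial model of $\partial R$. A tree-wall is a $q$-valent homogeneous tree whose vertices are vertices of $I_{p,q}$ and whose links are edges of $I_{p,q}$; by Lemma~\ref{TreeWallLem}, through any vertex one leaves along the $q-1$ edges diametrically opposite in the link $K(q,q)$ to the one by which one entered. Organizing the edges of a tree-wall into \emph{generations} $k\ge 0$, where generation $0$ is the innermost edge $E$ and generation $k{+}1$ is obtained from generation $k$ by one application of the recursion of Lemma~\ref{TreeWallLem} \emph{in the outward direction}, one gets $1$ edge in generation $0$ and $2(q-1)^k$ edges in generation $k\ge 1$ (two outward directions at $E$, then $q-1$ new edges per further step); for a \emph{half-tree-wall}, where only one of the two outward directions at $E$ is used, generation $k$ has $(q-1)^k$ edges for all $k\ge 0$. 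The boundary of the wedge then decomposes accordingly: if $R$ is delimited by a tree-wall with innermost edge $E$ (Lemmas~\ref{TreeWallLem} and \ref{WedgeLem1}), then $\partial R$ is that tree-wall, i.e.\ two half-tree-walls rooted at $E$ overlapping in precisely the edge $E$; if $R$ is delimited by two half-tree-walls meeting at a vertex $V$ (Lemma~\ref{WedgeLem2}), then $\partial R$ is the disjoint union of the two half-tree-walls rooted at the two distinct edges $E_1\ne E_2$ of $C$ incident to $V$.

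Second I would establish the geometric dictionary that generation $k$ of $\partial R$ lies at building layer $g+k$, so that ``layer less than $\Lambda$'' selects exactly the generations $0\le k\le \Lambda-g$. For this I would pick any apartment $A$ containing $C$: since apartments of $I_{p,q}$ are regular $p$-gon tessellations of $\mathbb H^2$ (HaPPY tilings), $\partial R\cap A$ is a wall, i.e.\ a bi-infinite geodesic of the $1$-skeleton, whose distance to $O$ attains its minimum value $g$ exactly at the innermost edge --- this is the identification of $g$ with the Gromov product of the two boundary ends asserted in the statement, via \eqref{distGr} --- and past this minimum each successive edge of the wall is one layer farther from $O$, exactly as in the HaPPY code. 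Passing from $A$ to the full building only replaces each such edge by its $q-1$ siblings in a common link, which sit at the same layer by symmetry of $I_{p,q}$; this is the same branching argument already used in the proof of Proposition~\ref{RecoveryProp}.

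Third I would assemble the count: a half-tree-wall contributes $\sum_{k=0}^{\Lambda-g}(q-1)^k=\tfrac{1}{q-2}\bigl((q-1)^{\Lambda-g+1}-1\bigr)$ edges within the cutoff, where the hypothesis $q\ge 3$ is exactly what makes this geometric series nondegenerate. In the two-half-tree-wall case the two contributions are disjoint, giving $N_{tiles}=\tfrac{2}{q-2}\bigl((q-1)^{\Lambda-g+1}-1\bigr)$; in the single-tree-wall case they overlap in exactly the edge $E$, so one subtracts $1$ and obtains $N_{tiles}=\tfrac{2}{q-2}\bigl((q-1)^{\Lambda-g+1}-1\bigr)-1$, as claimed. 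The step I expect to be the main obstacle is the geometric dictionary of the second paragraph --- proving carefully that the distance to $O$ increases by exactly one at each outward step along the tree-wall and pinning down the precise normalization of the generation index against the cutoff $\Lambda$ (the natural place for an off-by-one to hide); once that is in place the remainder is the bookkeeping of a geometric series together with structural facts about tree-walls and apartments already recorded in Lemmas~\ref{TreeWallLem}--\ref{BallWedgeLem} and in the proof of Proposition~\ref{RecoveryProp}.
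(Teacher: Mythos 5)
Your proposal is correct and follows essentially the same route as the paper, whose proof is a one-line appeal to ``the tree structure of the boundary of the entanglement wedge'': you simply make explicit the geometric-series count over generations of the tree-wall (one root edge plus $2(q-1)^k$, respectively two disjoint half-tree-walls with $(q-1)^k$ per generation), which reproduces both stated formulas including the difference of $1$ from the shared root edge. The layer-versus-generation dictionary you flag as the delicate point is exactly what the paper leaves implicit, and your bookkeeping is consistent with its formulas.
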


\begin{proof}
This is a direct consequence of the tree structure of the boundary of the entanglement wedge.
\end{proof}

At each of these tiles, $q-1$ links are cut. Therefore, $$N_{cut}=(q-1)\left(\frac{2}{q-2}((q-1)^{\Lambda-g+1}-1)-1\right),$$ or $$N_{cut}=(q-1)\frac{2}{q-2}((q-1)^{\Lambda-g+1}-1).$$
Hence, $$N_{cut}\sim \frac{2}{q-2}\frac{(q-1)^{\Lambda+2}}{(q-1)^g}.$$

We now follow arguments of \cite{BourdonPaper} to obtain a precise Ryu--Takayanagi formula for ball-shaped regions of the boundary. 

Let us consider the case $q>2$. 
Let $\text{Hdim} \,\partial I_{p,q}$ denote the Hausdorff dimension $\text{Hdim} \,\delta_x$ of Theorem~1.1 and 1.2a of \cite{BourdonPaper} (see Lemma 3.1.4 of \cite{BourdonPaper}).

\begin{thm}
The Ryu--Takayanagi formula holds for ball entanglement wedges in $I_{p,q}$, as in Lemma~\ref{BallWedgeLem}, with
\begin{equation}\label{NcutBall}
C^{-1}r^\beta\leq \frac{N_{cut}}{(q-1)^\Lambda}\leq Cr^\beta,
\end{equation} 
for a constant $C>0$ (independent of the boundary region), and with 
\begin{equation}\label{betadim}
\beta={\rm Hdim}\, \partial I_{p,q}-1,
\end{equation}
with $r$ the radius of the boundary ball. 
\end{thm}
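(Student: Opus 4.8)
The plan is to convert the link count already obtained into a statement about the boundary radius via the visual metric, and then to identify the resulting exponent with Bourdon's Hausdorff dimension. First I would note that both displayed formulas for $N_{cut}$ have the form $N_{cut}\asymp (q-1)^{\Lambda-g}$ with implied constants depending only on $q$; dividing by $(q-1)^{\Lambda}$ the cutoff cancels and $N_{cut}/(q-1)^{\Lambda}\asymp (q-1)^{-g}$, where $g=d(O,C)$ is the distance from the apex chamber $C$ of the ball wedge (Lemma~\ref{BallWedgeLem}) to the center $O$, i.e.\ the Gromov product governing the visual size of $\partial R$ as recorded in the preceding Proposition. Next I would use that $\partial R$, being the shadow of $C$ seen from $O$, is a visual ball up to bounded distortion, so by the shadow-lemma consequence of \eqref{distGr} its radius satisfies $r\asymp a^{-g}$, equivalently $g=-\log_a r+O(1)$. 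Substituting gives $(q-1)^{-g}=(a^{-g})^{\log(q-1)/\log a}\asymp r^{\,\log(q-1)/\log a}$, which is \eqref{NcutBall} with $\beta=\log(q-1)/\log a$ and with $C$ built from $\lambda$ in \eqref{distGr} and the ($q$-dependent, region-independent) constants of the tree count.

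To close the argument I would prove \eqref{betadim} by taking $a$ to be exactly the parameter of the visual metric $\delta_x=|\cdot|_a$ for which Theorems~1.1 and~1.2a of \cite{BourdonPaper} evaluate $\text{Hdim}\,\partial I_{p,q}$. The conceptual input is that this Hausdorff dimension splits along the apartment structure of $I_{p,q}$: restricted to the circle at infinity of a single apartment --- a tiling of $\mathbb{H}^2$ by $p$-gons --- the metric $|\cdot|_a$ is a visual metric of $\mathbb{H}^2$, whose Hausdorff dimension is fixed by $a$ together with the scale of the tiling, and Bourdon's normalization of $a$ is calibrated so that this circle carries Hausdorff dimension exactly $1$; the $(q-1)$-fold branching of tree-walls then multiplies by $q-1$ the number of chamber shadows of visual size $\asymp a^{-n}$ at each layer $n$, contributing an extra $\log(q-1)/\log a$. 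Hence $\text{Hdim}\,\partial I_{p,q}=1+\log(q-1)/\log a$, which is precisely the content of Lemma~3.1.4 of \cite{BourdonPaper}, and comparing with the first paragraph yields $\beta=\text{Hdim}\,\partial I_{p,q}-1$. Uniformity of the constant $C$ in \eqref{NcutBall} is then immediate, since every comparison used --- $\lambda$, the $O(1)$ relating $g$ and $-\log_a r$, and the constants counting edges of a $q$-valent homogeneous tree --- depends only on $p$, $q$, $\delta$ and $a$, never on the chosen wedge.

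The step I expect to be the genuine obstacle is the exponent-level bookkeeping in the second paragraph: a discrepancy in $\beta$ cannot be absorbed into $C$, so one must verify \emph{exactly}, not merely up to multiplicative constants, that the parameter $a$ entering the radius $r$ of a ball entanglement wedge is the same $a$ for which Bourdon's formula reads $1+\log(q-1)/\log a$ --- equivalently, that an apartment's circle at infinity has Hausdorff dimension precisely $1$ in $|\cdot|_a$ --- and one must pin down the shadow estimate $r\asymp a^{-g}$ with implied constant independent of the wedge. The degenerate case $q=2$, where $\beta=0$ and the power law degenerates to the logarithmic growth $N_{cut}\asymp\log(1/r)$ noted in the introduction, would need a separate treatment and is deliberately outside the scope of the statement.
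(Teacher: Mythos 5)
Your proposal is correct and follows essentially the same route as the paper: both reduce \eqref{NcutBall} to $N_{cut}/(q-1)^\Lambda\asymp (q-1)^{-g}$ from the tree-wall count, compare $r\asymp a^{-g}$ for Bourdon's visual metric of parameter $a=e^{\tau(p,2)}$ (via the shadow lemma), and identify $\beta=\log(q-1)/\tau(p,2)$ with $\mathrm{Hdim}\,\partial I_{p,q}-1$ using Bourdon's dimension formula. Your ``apartment has dimension $1$ plus branching contributes $\log(q-1)/\log a$'' bookkeeping is just a rephrasing of the paper's use of $\mathrm{Hdim}\,\delta_x=\tau(p,q)/\tau(p,2)$ together with $\tau(p,q)=\tau(p,2)+\log(q-1)$, and you correctly flag the same subtlety the paper addresses by fixing $a=e^{\tau(p,2)}$.
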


\begin{proof}

 In \cite{BourdonPaper}, it is proven that there exists a visual metric $\delta_x$ of parameter $e^{\tau(p,2)}$ on the boundary of the Bourdon building, where $\tau(p,2)$ is the growth rate of the Weyl group of the building. By definition of the distance $\delta_x$, the radius of a boundary ball is controlled by $a^{-g}=e^{-g\tau(p,2)}$.\footnote{One can make this statement more rigorous by using Sullivan's shadow lemma, described in Appendix \ref{PStheory}, to relate the geometry of the group $\Gamma$ to that of the Bourdon building itself. See also Section 3.2 of \cite{BourdonPaper}.} Note that then, \begin{equation}\frac{N_{cut}}{(q-1)^\Lambda}\sim \frac{1}{(q-1)^g}\sim r^\beta,\end{equation} where \begin{equation}\beta=\frac{\mathrm{log}(q-1)}{\tau(p,2)}.\end{equation} For this specific choice of $a=e^{\tau(p,2)}$, it can be shown \cite{Coornaert} that \begin{equation}\beta= \mathrm{Hdim}\,\delta_x-1.\end{equation} This is because in the case of the Bourdon building, we have \cite{BourdonPaper}\begin{equation}\mathrm{Hdim}\,\delta_x= \frac{\tau(p,q)}{\tau(p,2)},\end{equation} and \begin{equation}\tau(p,q)=\tau(p,2)+\mathrm{log}(q-1).\end{equation}

 \end{proof}
\begin{rem}
This proof implies, among other things, that this choice of $\delta_x$ realizes the \textit{conformal dimension} of the boundary of the Bourdon building \cite{Coornaert}. This is an important result from the point of view of geometric group theory.
\end{rem}
 This result is quite striking: it tells us that entanglement entropy in our tensor network knows about the Hausdorff dimension of the boundary! It is also nice to realize that this behavior is in agreement with cases of holographic CFTs whose Cauchy slice dimension is an integer strictly larger than 1: in this case it is known that CFT entanglement entropy scales as a power law in the radius, where the exponent is dictated by the dimension of the time slice of the boundary minus one. Therefore, it seems like our tensor network is simulating a conformal field theory on a fractal! Ryu--Takayanagi surfaces in the network are given by rooted trees, and the entanglement entropy for corresponding ball-shaped regions are given by our power law.

\subsection{The case of HaPPY-like tilings}

The case $q=2$ corresponds to a Fuchsian tiling of the hyperbolic plane, i.e.~to a $p$-gon HaPPY tiling. In this case, it is easier to link $N_{cut}$ with the size of the boundary ball. Indeed, we have $$N_{tiles}=N_{cut}=2(\Lambda-g)+1,$$ and the size $r$ of the boundary region satisfies $$-g\sim\mathrm{log\;}{r}.$$ Hence $$N_{cut}\sim\mathrm{log}\frac{r}{\varepsilon},$$where $$\varepsilon=e^{-\frac{\Lambda}{\alpha}}$$ for some $\alpha>0$. This approximately reproduces the logarithmic behavior of entanglement entropy on the boundary.

\subsection{The infinite-dimensional limit: Hilbert spaces and nets of local algebras}\label{InfDimSec}

One of the advantages of our construction is that it provides us with a large family of nets of infinite-dimensional exact quantum error-correcting codes with complementary recovery. It is therefore an explicit example of nets of conditional expectations, as introduced by Faulkner in \cite{Faulkner:2020hzi}.

Let us first associate an infinite-dimensional code and a physical Hilbert space to our tensor network. 
The idea will be to take a direct limit of Hilbert spaces, both in the bulk and on the boundary. 

\begin{prop}\label{HilbertInfDimProp}
There is an injection $\mathcal{H}^\Lambda_{code}\rightarrow \mathcal{H}^{\Lambda+1}_{code}$ of Hilbert spaces from the truncated network at layer $\Lambda$
to level $\Lambda+1$ compatible with the maps $u^\Lambda:  \mathcal{H}^\Lambda_{code}\to \mathcal{H}^\Lambda_{phys}$ through commutative diagrams. These 
maps define as direct limits the infinite dimensional Hilbert spaces 
\begin{equation}\label{Hinfdim}
\mathcal{H}_{code} =\varinjlim_\Lambda \mathcal{H}_{code}^\Lambda \ \ \ \ \text{ and } \ \ \ \  \mathcal{H}_{phys} =\varinjlim_\Lambda \mathcal{H}_{phys}^\Lambda
\end{equation}
with an induced isometry $u:\;\mathcal{H}_{code}\rightarrow\mathcal{H}_{phys}$.
\end{prop}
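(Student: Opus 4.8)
The plan is to build the injections $\mathcal{H}^\Lambda_{code}\hookrightarrow\mathcal{H}^{\Lambda+1}_{code}$ layer by layer using the isometry property already established. When we pass from the truncated network at layer $\Lambda$ to the one at layer $\Lambda+1$, we add one more shell of chambers, each carrying a perfect tensor and $q-1$ new dangling legs. The code Hilbert space at level $\Lambda$ is the tensor product of the qudit Hilbert spaces on all bulk dangling legs up to layer $\Lambda$; the code Hilbert space at level $\Lambda+1$ is that same tensor product with the new shell's dangling-leg factors adjoined. The natural map is then $|\psi\rangle\mapsto|\psi\rangle\otimes|\Omega\rangle$, where $|\Omega\rangle$ is a fixed reference (e.g.\ product) state on the newly added bulk legs; this is manifestly an isometry of Hilbert spaces. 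First I would write this down carefully, checking that the choice of $|\Omega\rangle$ can be made coherently across all $\Lambda$ so that the maps are genuinely compatible under composition (which is automatic since each step just appends a new tensor factor).

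Next I would check the commuting-square condition. The physical Hilbert space $\mathcal{H}^\Lambda_{phys}$ is the tensor product over the boundary legs at cutoff $\Lambda$, and $u^\Lambda$ is the isometry produced by contracting all the perfect tensors in the truncated network (well-defined and isometric by the earlier theorem, using the acyclic orientation). Going from $\Lambda$ to $\Lambda+1$, the old boundary legs at layer $\Lambda$ get contracted into the new shell of tensors, and the new boundary legs appear at layer $\Lambda+1$; this gives an isometric inclusion $\mathcal{H}^\Lambda_{phys}\hookrightarrow\mathcal{H}^{\Lambda+1}_{phys}$ (again isometric because the added tensors are perfect and the orientation can be chosen so the shell acts isometrically). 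One then has to verify
\begin{equation*}
u^{\Lambda+1}\circ\iota^\Lambda_{code}=\iota^\Lambda_{phys}\circ u^\Lambda,
\end{equation*}
which amounts to the graphical identity that contracting the first $\Lambda$ shells and then the $(\Lambda+1)$-th shell (while feeding the fixed state $|\Omega\rangle$ into the new bulk legs) is the same as first contracting $\Lambda$ shells and then composing with the shell map. This is a routine diagrammatic manipulation once the reference state $|\Omega\rangle$ on the new bulk legs is also built into the definition of $\iota^\Lambda_{phys}$ — i.e.\ the boundary inclusion is \enquote{contract the new shell with $|\Omega\rangle$ plugged in}.

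With the compatible directed system $(\mathcal{H}^\Lambda_{code},\iota^\Lambda_{code})$ and $(\mathcal{H}^\Lambda_{phys},\iota^\Lambda_{phys})$ in hand, I would take the algebraic direct limit of each, then complete to obtain Hilbert spaces $\mathcal{H}_{code}$ and $\mathcal{H}_{phys}$ as in \eqref{Hinfdim}. Because all the connecting maps are isometries, the induced map on the direct limit is isometric on the dense subspace $\bigcup_\Lambda\mathcal{H}^\Lambda_{code}$, hence extends uniquely to an isometry $u:\mathcal{H}_{code}\to\mathcal{H}_{phys}$; the universal property of the direct limit, applied to the commuting squares, is exactly what produces $u$ and its compatibility with all the $u^\Lambda$.

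The main obstacle — really the only subtle point — is making the two directed systems genuinely compatible, i.e.\ pinning down the boundary inclusion $\iota^\Lambda_{phys}$ so that the squares commute. Naively the old boundary legs are \emph{consumed} when one grows the network, so there is no canonical inclusion $\mathcal{H}^\Lambda_{phys}\hookrightarrow\mathcal{H}^{\Lambda+1}_{phys}$ until one fixes a state on the new bulk dangling legs; the right move is to define $\iota^\Lambda_{phys}$ as \enquote{apply the $(\Lambda+1)$-shell of perfect tensors with the reference state $|\Omega\rangle$ on the new bulk legs,} which is isometric since a perfect tensor with a fixed input is an isometry and the shell admits an acyclic orientation by the argument of Theorem~3.4. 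Once this definition is adopted the commuting squares hold by construction and everything else is formal.
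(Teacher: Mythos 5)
Your proposal follows essentially the same route as the paper: tensoring with a fixed reference state on the new bulk dangling legs to define $\mathcal{H}^\Lambda_{code}\hookrightarrow\mathcal{H}^{\Lambda+1}_{code}$, defining the boundary inclusion by pushing a layer-$\Lambda$ state through the new shell of perfect tensors with the reference state plugged into the new bulk legs, and then taking the direct limit of the resulting commutative diagram to obtain $\mathcal{H}_{code}$, $\mathcal{H}_{phys}$ and the induced isometry $u$. Your extra care about why the squares commute and why the shell maps are isometric (perfect tensors with fixed inputs plus the acyclic orientation) only spells out details the paper leaves implicit, so the argument is correct and matches the paper's proof.
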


\begin{proof}
For the bulk Hilbert space, define a reference state for the dangling qudits. Let us denote it by $\ket{\text{ref}}$. If $\mathcal{H}^\Lambda_{code}$ denotes the code subspace of the truncated network at layer $\Lambda$, we define an injection $\mathcal{H}^\Lambda_{code}\rightarrow \mathcal{H}^{\Lambda+1}_{code}$ by tensoring the state of $\mathcal{H}^\Lambda_{code}$ with qudits in the state $\ket{\text{ref}}$ at layer $\Lambda +1$.\footnote{These maps have a few shortcomings, like the fact that they do not create an entangled bulk state. See \cite{Gesteau:2020hoz} for another possible choice of Hilbert space maps. However, our maps here have good functorial properties at the level of operators, and will be enough for our purposes.} In order to construct a map from $\mathcal{H}^\Lambda_{phys}$ to $\mathcal{H}^{\Lambda+1}_{phys}$, we simply take the state at layer $\Lambda$, and map it through the tensor network from layer $\Lambda$ to layer $\Lambda+1$, with all dangling bulk nodes fixed in the state $\ket{\text{ref}}$. If $u^\Lambda$ denotes the map from $\mathcal{H}^\Lambda_{code}$ to $\mathcal{H}^\Lambda_{phys}$, we then obtain a commutative diagram of the form
\begin{align}
\begin{array}{ccccccc}
	\calh_{code}^1 & \rightarrow & \calh_{code}^2 & \rightarrow & \quad\cdots \\ 
	\downarrow &  & \downarrow &  & \\ 
	\calh_{phys}^1 & \rightarrow & \calh_{phys}^2 & \rightarrow & \quad\cdots & \quad,
\end{array} 
\label{eq:commutativediagram}
\end{align}
where the horizontal arrows denote the bulk-to-bulk maps and boundary-to-boundary maps, and the vertical arrows denote the isometries $u^\Lambda$. We can then take the direct limit of this diagram and define $$\mathcal{H}_{code}=\underset{\underset{\Lambda}{\longrightarrow}}{\mathrm{lim}}\;\mathcal{H}_{code}^\Lambda,$$and $$\mathcal{H}_{phys}=\underset{\underset{\Lambda}{\longrightarrow}}{\mathrm{lim}}\;\mathcal{H}_{phys}^\Lambda,$$ as well as an isometry $$u:\;\mathcal{H}_{code}\rightarrow\mathcal{H}_{phys}$$
as in \eqref{Hinfdim}.
\end{proof}

Now, we are interested in studying a net of local observables on the boundary, and assigning an entanglement wedge to each of them. 

\begin{thm}
For a given entanglement wedge associated to a tile $T$ as before, there are $C^*$-algebras $\mathcal{A}_{code}$ and $\mathcal{A}_{phys}$, obtained as direct limits of entanglement wedge algebras $\mathcal{A}_{code}^\Lambda$ 
at layers $\Lambda$ and their bulk-to-boundary maps. They are related by a unital isometric $\star$-homomorphism $\iota: \mathcal{A}_{code}\to \mathcal{A}_{phys}$,
which is compatible with the isometry $u:\;\mathcal{H}_{code}\rightarrow\mathcal{H}_{phys}$ of Proposition~\ref{HilbertInfDimProp} in the sense that
$\iota(A)u = u A$.
\end{thm}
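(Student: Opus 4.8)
The plan is to mirror, at the level of operator algebras, the direct-limit construction of Hilbert spaces from Proposition~\ref{HilbertInfDimProp}, using the entanglement wedges of Section~\ref{WedgeSec} to carve out local subalgebras. First I would fix an entanglement wedge $W$ associated to a tile $T$ (constructed as in Lemma~\ref{TreeWallLem} or Lemma~\ref{WedgeLem2}), and at each cutoff $\Lambda$ define $\mathcal{A}_{code}^\Lambda$ to be the algebra of bounded operators acting on the dangling-leg Hilbert spaces of those bulk tiles of the truncated network that lie inside $W$; dually, $\mathcal{A}_{phys}^\Lambda$ is the algebra generated on the boundary legs of $\mathcal{H}_{phys}^\Lambda$ that lie in the boundary region cut out by $W$ (the portion of the Gromov boundary reachable by geodesics inside $W$, truncated at layer $\Lambda$). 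Since these are finite-dimensional, they are trivially $C^*$-algebras. Complementary recovery for $W$ (Proposition~\ref{RecoveryProp}) supplies, at each $\Lambda$, a $\star$-homomorphism $\iota^\Lambda:\mathcal{A}_{code}^\Lambda\to\mathcal{A}_{phys}^\Lambda$ realizing bulk reconstruction in the wedge, and it is unital and isometric because reconstruction is implemented by a partial isometry $u^\Lambda$ with $\iota^\Lambda(A)u^\Lambda = u^\Lambda A$ on the code subspace.

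Next I would check that these data are compatible with the transition maps as $\Lambda\to\Lambda+1$. For the code algebras, the injection $\mathcal{H}_{code}^\Lambda\hookrightarrow\mathcal{H}_{code}^{\Lambda+1}$ of Proposition~\ref{HilbertInfDimProp} tensors in the reference state $\ket{\text{ref}}$ on new layers; correspondingly an operator $A\in\mathcal{A}_{code}^\Lambda$ extends to $A\otimes\mathrm{id}$ on the new dangling legs inside $W$, giving a unital $\star$-embedding $\mathcal{A}_{code}^\Lambda\hookrightarrow\mathcal{A}_{code}^{\Lambda+1}$. For the physical side, pushing through the one-layer tensor-network map (with new bulk legs fixed to $\ket{\text{ref}}$) sends $\mathcal{A}_{phys}^\Lambda$ into $\mathcal{A}_{phys}^{\Lambda+1}$; here the key point, which I would verify using the isometry property at each layer (the first Theorem of Section~\ref{BourdonSec}, the bulk-to-boundary isometry) together with the wedge structure, is that an operator localized in the wedge region at layer $\Lambda$ maps into one localized in the (larger) wedge region at layer $\Lambda+1$, so the embedding respects locality. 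One then assembles the commutative ladder of $C^*$-algebras and $\star$-homomorphisms
\begin{align}
\begin{array}{ccccc}
\mathcal{A}_{code}^1 & \rightarrow & \mathcal{A}_{code}^2 & \rightarrow & \cdots \\
\downarrow & & \downarrow & & \\
\mathcal{A}_{phys}^1 & \rightarrow & \mathcal{A}_{phys}^2 & \rightarrow & \cdots
\end{array}
\end{align}
and defines $\mathcal{A}_{code}:=\varinjlim_\Lambda\mathcal{A}_{code}^\Lambda$ and $\mathcal{A}_{phys}:=\varinjlim_\Lambda\mathcal{A}_{phys}^\Lambda$ as $C^*$-inductive limits, with $\iota:=\varinjlim_\Lambda\iota^\Lambda$. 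Since each $\iota^\Lambda$ is unital, isometric, and $\star$-preserving, and the inductive-limit norm is the limit of the norms at finite stage, $\iota$ inherits all three properties.

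Finally I would establish compatibility with $u$. Because each square $\iota^{\Lambda}(A)u^{\Lambda} = u^{\Lambda}A$ commutes and the vertical maps $u^\Lambda$ are exactly the ones whose direct limit defines $u$ in Proposition~\ref{HilbertInfDimProp}, passing to the limit gives $\iota(A)u = uA$ for all $A\in\mathcal{A}_{code}$, first on the dense subspace $\varinjlim\mathcal{H}_{code}^\Lambda\subset\mathcal{H}_{code}$ and then everywhere by continuity and the isometry of $u$.

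I expect the main obstacle to be the layer-compatibility of \emph{locality} on the physical side: one must make sure that the boundary region cut out by the wedge $W$ at level $\Lambda$ really does sit inside the one at level $\Lambda+1$ in a way compatible with the boundary-to-boundary map, i.e.\ that the tree-wall (or half-tree-wall) geometry of $\partial W$ is stable under refinement of the cutoff. This is geometric rather than algebraic, and follows from the fact that tree-walls and the ball entanglement wedges of Lemma~\ref{BallWedgeLem} are defined by semi-infinite geodesics from the center, hence truncating at a deeper layer only enlarges the wedge monotonically; but pinning this down carefully — together with checking that the greedy-algorithm reconstruction maps $\iota^\Lambda$ are themselves compatible with the transition maps, not merely defined stagewise — is the step that needs the most care.
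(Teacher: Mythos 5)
Your overall architecture is the same as the paper's: matrix algebras on the bulk dangling legs inside the wedge, a commutative ladder of algebras mirroring the Hilbert-space ladder of Proposition~\ref{HilbertInfDimProp}, $C^*$-inductive limits, and passage of the intertwining relation to the limit. The genuine gap is in your physical-side transition map. You propose to push boundary operators through the one-layer map ``with new bulk legs fixed to $\ket{\text{ref}}$,'' i.e.\ to conjugate by the Hilbert-space isometry $V^\Lambda:\mathcal{H}_{phys}^\Lambda\to\mathcal{H}_{phys}^{\Lambda+1}$. That map is not unital ($V^\Lambda \mathbf{1} (V^\Lambda)^\dagger = V^\Lambda (V^\Lambda)^\dagger$ is a proper projection), and the square with the code-side maps then fails to commute strictly: for $A\in\mathcal{A}_{code}^\Lambda$ one gets at best $V^\Lambda\,\iota^\Lambda(A)\,(V^\Lambda)^\dagger = \iota^{\Lambda+1}(A\otimes\mathrm{id})\,V^\Lambda(V^\Lambda)^\dagger$, which differs from $\iota^{\Lambda+1}(A\otimes\mathrm{id})$ by the range projection. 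With a non-commuting, non-unital ladder the inductive-limit map $\iota$ and its claimed unitality do not follow as written. The paper avoids this by defining the boundary-to-boundary maps at the \emph{operator} level: add one layer of tensors and conjugate by the perfect-tensor unitaries with \emph{identity operators} (not $\ket{\text{ref}}$) on the new bulk nodes, with the pushing choices made to ``follow'' $\iota^{\Lambda+1}$, so that each horizontal map is a unital $\star$-embedding and the square commutes essentially by construction.

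Two related smaller points. First, your claim that each $\iota^\Lambda$ is unital and isometric ``because reconstruction is implemented by a partial isometry $u^\Lambda$'' is not a proof: the intertwining relation alone does not make a map a $\star$-homomorphism. What does the work is the operator-pushing construction of \cite{Pastawski:2015qua} (tensoring with identities and conjugating by unitaries), which is manifestly an injective unital $\star$-homomorphism and hence isometric. Second, the compatibility worry you flag at the end --- that the stagewise reconstruction maps, which are not unique, might not mesh with the transition maps --- is exactly the point the paper settles by fiat: it \emph{defines} the layer-$\Lambda$-to-$(\Lambda+1)$ boundary map using the same pushing choices as $\iota^{\Lambda+1}$, rather than proving compatibility of independently chosen maps. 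Once you adopt that definition (identities on the new bulk legs, pushing choices fixed coherently across layers), the rest of your argument --- finite-dimensional stages, $C^*$-inductive limits, locality of the pushed operators inside the wedge's boundary region, and passing $\iota^\Lambda(A)u^\Lambda = u^\Lambda A$ to the limit --- goes through essentially as in the paper.
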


\begin{proof}
For this, consider a bulk tile $T$, and define an entanglement wedge associated to this tile as in Subsection~\ref{WedgeSec}. On top of each bulk qudit, we introduce a finite-dimensional algebra $\mathcal{M}_d(\mathbb{C})$. The entanglement wedge algebra $\mathcal{A}_{code}^\Lambda$ at layer $\Lambda$ is given by the tensor product of all bulk dangling legs, and the map from $\mathcal{A}^\Lambda$ to $\mathcal{A}^{\Lambda+1}$ is given by tensoring with the identity on qudits at layer $\Lambda+1$. Recall that there is a well-defined bulk-to-boundary map $\iota^\Lambda$ at the level of operators, by successively tensoring with the appropriate number of identities and conjugating by perfect tensor unitaries \cite{Pastawski:2015qua},\footnote{This map is not unique.} and that by the previous subsections, this map has a range contained on the boundary of the entanglement wedge defined by $T$. Moreover, one can define a map from the complement algebra to the complement boundary region, by complementary recovery. We can use the $\iota^\Lambda$ to define a boundary-to-boundary map at the level of algebras: just add one more layer of tensor network, and conjugate an operator by the perfect tensor isometries with identitity matrices on the new bulk nodes, following the map given by $\iota^{\Lambda+1}$. We then obtain another commutative diagram of the form
\begin{align}
\begin{array}{ccccccc}
	\mathcal{A}_{code}^1(T) & \rightarrow & \mathcal{A}_{code}^2(T) & \rightarrow & \quad\cdots \\ 
	\downarrow &  & \downarrow &  & \\ 
	\mathcal{A}_{phys}^1(T) & \rightarrow & \mathcal{A}_{phys}^2(T) & \rightarrow & \quad\cdots & \quad,
\end{array} 
\label{eq:commutativediagram2}
\end{align}
where the horizontal arrows are the bulk-to-bulk and boundary-to-boundary maps, and the vertical arrows are the $\iota^\Lambda$. Like in the previous case, we can take the direct limit $C^\ast$-algebra, and we obtain two $C^\ast$-algebras $\mathcal{A}_{code}$ and $\mathcal{A}_{phys}$, related by a unital isometric $\ast$-homomorphism $$\iota:\;\mathcal{A}_{code}\rightarrow\mathcal{A}_{phys}.$$ Moreover we can see $\mathcal{A}_{code}$ and $\mathcal{A}_{phys}$ as acting on $\mathcal{H}_{code}$ and $\mathcal{H}_{phys}$, and by construction of $\iota$, for $A\in\mathcal{A}_{code}$, $$\iota(A)u=uA,$$ and similarly for the complementary algebras.
\end{proof}

Note that this construction depends on our choice of map $\iota$, which is not always unique. For example, it is not unique in the case of the pentagonal HaPPY code. This breaks the symmetry of the network, but it still gives rise to a net of holographic conditional expectations in the sense of \cite{Faulkner:2020hzi} (with the difference that we left the construction here at the level of $C^\ast$-algebras).

\section{The general case: holographic tensor networks on hyperbolic buildings}\label{HighDimSec}

Our construction for the case of Bourdon buildings can be generalized to a much larger class of buildings in various integer bulk dimensions and non-integer boundary Hausdorff dimension.
We first give a set of sufficient conditions for our construction to easily generalize. We then introduce a few interesting examples of tensor networks that satisfy these conditions.

\subsection{A class of tensor networks}

We want to extend our construction to a well-chosen class of hyperbolic buildings. In order for the same method to work, we need to check the following conditions:

\begin{itemize}
    \item The tensor network still defines an isometric map at each layer.
    \item Complementary recovery still works for well-chosen entanglement wedges. 
    \item The Ryu--Takayanagi scaling of the entanglement entropy for well-chosen boundary balls still follows a power law of exponent $\beta-1$, where $\beta$ encodes the dimension of the boundary time slice, if $\beta>1$, or a logarithmic behavior if $\beta=1$.
\end{itemize}

First, in order to show that the tensor network defines an isometric map at each layer, we had to use the fact that, given a central chamber and a fixed apartment containing it, any chamber of that apartment is adjacent to more outgoing tiles further away from the center than tiles closer or equidistant to the center. Thus, our argument only used the apartment structure. We are therefore reduced to finding a condition of the Weyl group of our building that guarantees that it maps the bulk to the boundary Hilbert spaces isometrically. 
An explicit sufficient condition on the Coxeter system has been found by Kohler and Cubitt (see Section~6.1.2 of \cite{Kohler:2018kqk}), and we use it here:

\begin{defn} Let $B$ be a building of Weyl group $W$ with Coxeter system $(W,S)$. Let $\mathcal{F}:=\{J\subset S, W_J\;\text{is finite}\}.$ We will say that $B$ satisfies the \textit{isometry condition} if for all $J\in\mathcal{F}$, $$|J|\leq\frac{t-2}{2},$$ where $t$ is the number of indices of the perfect tensor, divided by the branching.
\end{defn}

\begin{lem}\label{IsomRightAngleLem}
The isometry condition is always satisfied for a right-angled hyperbolic building. 
\end{lem}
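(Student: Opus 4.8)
The plan is to show that for a right-angled building, the finite standard parabolic subgroups $W_J$ are exactly those with $J$ a clique in the ``commuting graph'' of $S$ — i.e.\ subsets $J$ on which all the off-diagonal $m_{st}$ equal $2$ — so that $W_J \cong (\mathbb{Z}/2)^{|J|}$ is finite, while any $J$ containing a pair $s,t$ with $m_{st}=\infty$ generates an infinite dihedral (hence infinite) subgroup. Thus $\mathcal{F}$ consists precisely of the cliques in this graph, and the isometry condition becomes a bound on clique sizes.

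First I would recall the combinatorial fact that for a Coxeter system $(W,S)$, the standard parabolic $W_J$ is itself a Coxeter group on $J$ with the restricted matrix $(m_{st})_{s,t\in J}$, and it is finite if and only if the associated bilinear form is positive definite. In the right-angled case $m_{st}\in\{2,\infty\}$: if some pair in $J$ has $m_{st}=\infty$ then the rank-two parabolic $W_{\{s,t\}}$ is the infinite dihedral group, so $W_J$ is infinite; conversely if every pair in $J$ has $m_{st}=2$, then all generators commute and $W_J\cong(\mathbb{Z}/2)^{|J|}$, which is finite. Hence $J\in\mathcal{F}$ iff $J$ is a ``commuting clique.''

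Next I would invoke the geometric meaning: the chambers are copies of a hyperbolic Coxeter polytope $P\subset\mathbb{H}^n$, and a finite parabolic $W_J$ corresponds to the (spherical) link data at a face of $P$ of codimension $|J|$, where $|J|$ of the bounding hyperplanes meet. Since $P$ is an honest polytope in $\mathbb{H}^n$, at most $n$ of its bounding hyperplanes can meet at a single point, so $|J|\le n$ for all $J\in\mathcal{F}$. The heart of the matter is then a counting argument on the number of legs of the perfect tensor: a chamber (an $n$-dimensional right-angled polytope) has $n$ pairs of ``antipodal'' facets, so a naturally associated perfect tensor placed on the chamber has $t=2(n+k)$ or at least $t\geq 2n + (\text{branching-normalized bulk legs})$ indices per branch, giving $\tfrac{t-2}{2}\ge n\ge |J|$. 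I would make this precise using the explicit leg-count of Definition~\ref{BtensorDef} and its higher-dimensional analogue, verifying the inequality $|J|\le \tfrac{t-2}{2}$ holds for every $J\in\mathcal{F}$.

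The main obstacle I expect is matching the abstract parameter $t$ (``number of indices of the perfect tensor, divided by the branching'') to the concrete geometry: one must pin down how many facets a right-angled hyperbolic Coxeter polytope in dimension $n$ has, how many carry contracted (internal) legs versus boundary legs, and confirm that the perfect-tensor leg count is always at least $2n+2$ (or whatever the construction dictates), so that $\tfrac{t-2}{2}\ge n$. Once the dictionary ``$|J|\le n$ for cliques'' $\Leftrightarrow$ ``at most $n$ mutually orthogonal facets of $P$'' is combined with ``$t\ge 2n+2$,'' the inequality is immediate; the subtlety is entirely in setting up these two bounds carefully and uniformly across the class of right-angled hyperbolic buildings considered in this paper.
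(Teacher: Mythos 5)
Your first two steps match the paper exactly: in the right-angled case the finite parabolics $W_J$ are precisely the ``commuting cliques'' ($W_J\cong(\mathbb{Z}/2)^{|J|}$, infinite dihedral otherwise), and such a clique corresponds to a set of pairwise orthogonal facets of the chamber polytope $P\subset\mathbb{H}^n$, which must meet at a common face, so $|J|\le n$. The gap is in the final, quantitative step, which you yourself flag as the ``main obstacle'' and leave unresolved: you never establish a lower bound on the number of facets of $P$, hence on the leg count $t$. Worse, the heuristic you offer -- that an $n$-dimensional right-angled chamber has ``$n$ pairs of antipodal facets'' -- describes a combinatorial hypercube, which cannot be realized as a right-angled polytope in $\mathbb{H}^n$; e.g.\ the right-angled hyperbolic pentagon ($n=2$) has $5$ facets, not $4$. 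If one took that picture literally, $P$ would have $2n$ facets, and with a single dangling leg one would get $t=2n+1$ and $\tfrac{t-2}{2}=n-\tfrac12<n$, so the isometry condition would \emph{fail} at a maximal clique $|J|=n$; your inequality $\tfrac{t-2}{2}\ge n$ simply does not follow from what you have set up.

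The missing idea, which is exactly what the paper's proof supplies, is the geometric fact that a hyperbolic polytope can be right-angled only if it has \emph{strictly more} facets than the hypercube, i.e.\ more than $2n$ facets. Then a maximal set of pairwise orthogonal facets ($\le n$) has fewer than half as many elements as there are facets, and since every facet carries a contracted leg and each tensor has at least one dangling leg (per branch), one gets $t\ge F+1\ge 2n+2$, whence $\tfrac{t-2}{2}\ge n\ge |J|$ for every $J\in\mathcal{F}$. Your proposal has the correct skeleton and would close once this facet-count bound is invoked, but as written the decisive inequality rests on an unproven (and, in the form stated, false) geometric assumption.
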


\begin{proof}
Indeed, in that case the Coxeter system associated to the building has matrix entries $2$ and $\infty$. Hence, the subsets $J$ of $S$ for which $W_J$ is finite can only contain elements for which the Coxeter matrix elements for each pair are $2$. These must then be adjacent faces in the basic chamber of the building. Now, a hyperbolic polytope can be right angled iff it has more faces than a hypercube. This means that a maximal set of adjacent faces will always have less elements than half of the number of faces of the polytope. Identifying Coxeter generators with the faces of the polytope, and recalling that there is always at least one dangling leg inside the perfect tensors, we obtain the isometry condition.
\end{proof}

\begin{rem}
When the isometry condition is satisfied on top of our other conditions, a slight adaptation of the argument of \cite{Pastawski:2015qua} shows that our building defines an isometry from the bulk to the boundary at all layers.
\end{rem}

The second point that needs to be confirmed is complementary recovery for well-chosen entanglement wedges. 

\begin{lem}
The entanglement wedge constructions of Section~\ref{WedgeSec} extend to arbitrary hyperbolic buildings satisfying the isometry condition and satisfy complementary recovery.
\end{lem}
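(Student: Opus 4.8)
The plan is to mirror, almost verbatim, the argument given for the Bourdon building in Section~\ref{WedgeSec} and Proposition~\ref{RecoveryProp}, isolating precisely which structural features were actually used and checking they survive in the general setting. Recall that the Bourdon proof had three ingredients: (i) apartments look like regular hyperbolic tilings in which every chamber at layer $n$ touches at least as many chambers at layer $n+1$ as at layers $n$ and $n-1$ combined; (ii) the building branches symmetrically across each wall, so that the ``extra'' chambers sharing an edge with a given chamber are also at the outgoing layer; and (iii) the link of each vertex is a spherical building, which let us phrase tree-walls and the wedge-through-a-vertex construction in terms of reflections fixing a hyperplane of each apartment. So first I would restate the two wedge constructions of Lemmas~\ref{WedgeLem1} and~\ref{WedgeLem2} for a general hyperbolic building $B$: given a chamber $C$ and a codimension-one face $E$ (resp.\ a codimension-two face meeting at a vertex-type object $V$), the wedge is the union, over all apartments $A$ containing $C$, of the half-apartment (resp.\ the sector) cut out by the reflection hyperplane(s) associated to $E$ (resp.\ to the two walls through $V$). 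The building axioms---any two chambers lie in a common apartment, and any two apartments are related by an isometry fixing their intersection---guarantee this is well-defined and independent of the choices, exactly as in Lemma~\ref{WedgeLem1}.

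Next I would run the greedy-algorithm argument. Fix the wedge $R$ and, WLOG by the symmetry of the construction, take $R$ to be the component not containing the central chamber. One direction: reconstruction of $R$ from $\partial R$. Pick any chamber at layer $n$ inside $R$; embed it in an apartment $A$ containing $C$; since $A$ is a tessellation of $\mathbb{H}^n$ by the Coxeter polytope $P$ and $R\cap A$ is a half-space or sector of that tessellation, every chamber of $R\cap A$ is adjacent to strictly more chambers at layer $n+1$ (still inside $R$) than at layers $\le n$. Here is where the isometry condition enters, rather than being merely invoked: the Kohler--Cubitt bound $|J|\le (t-2)/2$ for finite parabolics $W_J$ is exactly the combinatorial statement that the number of faces of $P$ through any common vertex is bounded so that a chamber never ``loses'' legs when pushed outward---i.e.\ at each greedy step the perfect tensor has at least as many outgoing as incoming legs including the dangling ones. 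By symmetric branching of the building across walls, the other chambers sharing the relevant edge are also at the outgoing layer, so the greedy algorithm pushes all the way to the tree-wall. Repeating for the $q-1$ (or, in general, $\mathrm{(vertex\ valence)}-1$) complementary components, and using that wedges around a fixed vertex/edge partition the bulk, gives reconstruction of $R^c$ from $(\partial R)^c$, which is complementary recovery by the criterion of \cite{Pastawski:2015qua}.

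The main obstacle I anticipate is step (i) in the general setting: in the Bourdon case ``each tile at layer $n$ touches at least as many at layer $n+1$ as at layers $n, n-1$'' is a clean fact about Fuchsian $p$-gon tilings, but for a general hyperbolic Coxeter polytope $P$ in dimension $n\ge 3$ the analogous layer-monotonicity in an apartment is more delicate---chambers at a given combinatorial distance from the center can be adjacent along faces of several dimensions, and one must be careful that the count of ``outward'' neighbors dominates. The right way to handle this is again to lean on the isometry condition: the hypothesis is designed precisely so that the perfect-tensor push-through works apartment-by-apartment, so I would phrase the argument so that the only global input is ``$B$ satisfies the isometry condition'' plus ``$B$ has symmetric branching across walls'' (automatic for buildings, since all apartments through a wall are isometric fixing it), and then the half-space/sector structure of $R\cap A$ does the rest, exactly as in Proposition~\ref{RecoveryProp}. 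A secondary subtlety is that in higher dimension ``tree-wall'' must be replaced by its building-theoretic analogue---the union of the walls parallel to a given one, which is itself a lower-dimensional building rather than a tree---but this only affects terminology, not the recovery argument, since all that is used is that it separates $B$ (hence $\partial B$) into symmetric components.
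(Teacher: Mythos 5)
Your proposal is correct and follows essentially the same route as the paper: embed the chamber/link data into an apartment, use the half-apartment (generalized tree-wall) and vertex-sector constructions from Section~\ref{WedgeSec}, and rerun the greedy/partition argument with the isometry condition guaranteeing the perfect-tensor push-through. The paper's own proof is considerably terser (it simply invokes the apartment and link structure and says the Bourdon argument repeats), so your explicit discussion of where the Kohler--Cubitt bound enters and of the higher-dimensional layer-monotonicity subtlety is a faithful expansion of the same idea, matching how the paper later uses the link girth in the proof of Theorem~\ref{DMbuildings}.
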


\begin{proof}
This is guaranteed by the building structure of our networks. More precisely, consider a link in the building, and an $(n-1)$-polytope $P$ on that link corresponding to a given tensor in the network. Embed $P$ into a given apartment of the network. This apartment is isomorphic to a hyperbolic Coxeter system, hence we can repeat the construction of the previous section and use the link to construct a partition of the bulk into reconstructable regions. The other definition in terms of one single hyperplane (generalization of the tree-wall) is even more straightforward. 
\end{proof}

The third and last point is probably the most subtle one: our argument on the Hausdorff dimension scaling required calculating the Hausdorff dimension of the Bourdon building for a specific visual metric. A full study is out of the scope of this paper, but in order to be able to generalize it, we want our building chambers to each connect to the same number of edges through a wall, and to use the transitivity of the action of the isometry group of $X$ on the set of apartments that contain it, to reproduce the proof of the Ryu--Takayanagi scaling. This requires $B$ to contain a chamber whose fixator in the isometry group of $X$ acts transitively on the set of apartments that contain it. We now obtain the following result:

\begin{thm}\label{nHypThm}
Let $B$ be an $n$-dimensional hyperbolic building such that:
\begin{itemize}
    \item The Weyl group of $B$ satisfies the isometry condition.
    \item The link at each point is the same, and each $(n-1)$-polytope of the link is $q$-valent (in the sense that it touches $q$ other $(n-1)$-polytopes) for some $q$.
    \item $B$ contains a chamber whose fixator in the isometry group of $X$ acts transitively on the set of apartments that contain it.
\end{itemize}
Then, $B$ defines a quantum error-correcting code with complementary recovery for well-chosen bulk regions, and in a large class of these tensor networks, for these regions, the size of the Ryu--Takayanagi surface scales like $r^{\beta-1}$, where $r$ is the radius of the associated boundary ball, and $\beta$ is the scaling dimension of $\partial B$.
\end{thm}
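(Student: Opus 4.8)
The plan is to mirror the Bourdon analysis of Section~\ref{BourdonSec} step by step, invoking each of the three hypotheses exactly where a special feature of $I_{p,q}$ was used, and then to point out precisely which ingredient forces the scaling statement to be qualified (``a large class'').

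\textbf{Layered isometry.} First I would reprove the layer-by-layer isometry. By the isometry condition, every $J\subset S$ with $W_J$ finite satisfies $|J|\le (t-2)/2$, so, following the acyclic-orientation argument of \cite{Pastawski:2015qua} adapted as in the bulk-to-boundary isometry theorem of Section~\ref{BourdonSec}, one orients the internal edges of the $\Lambda$-truncated network so that every perfect tensor, with its dangling legs counted, has at least as many outgoing as incoming legs. The only geometric input is that, inside any apartment, a chamber at layer $n$ meets at least as many chambers at layer $n+1$ as at layers $\le n$ — a statement purely about the hyperbolic Coxeter tessellation forming the apartment — and the isometry condition is what makes the count close once the bulk legs are included; the uniform $q$-valent link hypothesis then propagates the orientation consistently across the branchings.

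\textbf{Entanglement wedges and complementary recovery.} Using the wedge construction of Section~\ref{WedgeSec}, already shown to generalize in the lemma preceding this theorem, I would pick an $(n-1)$-polytope $P$ in a link — a chamber $C$ of $B$ — together with one of its hyperplanes, or a codimension-two face, and take the union over all apartments through $C$ of the corresponding half-apartment or wedge. Complementary recovery then reduces, exactly as in Proposition~\ref{RecoveryProp}, to showing the greedy algorithm sweeps the wedge from the part of $\partial B$ it subtends and the complementary wedge(s) from the complementary boundary: inside a single apartment this is the HaPPY statement about reflection half-spaces of a hyperbolic Coxeter tessellation, and the uniform $q$-valent link guarantees that all sibling chambers produced at a branching sit at the correct layer, so the sweep works in the full building. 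The Ryu--Takayanagi formula $S_{phys}(\rho)=N_{cut}\log d+S_{code}(\rho)$ then follows immediately by the argument of \cite{Pastawski:2015qua}.

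\textbf{Scaling of $N_{cut}$ and the main obstacle.} Restricting to ball entanglement wedges (Lemma~\ref{BallWedgeLem}) defined by a chamber $C$ and the center $O$, with $g$ the distance from $C$ to $O$, $q$-valence makes the boundary tree-wall a rooted tree of branching $q-1$, so counting its faces up to layer $\Lambda$ gives $N_{cut}\sim c\,(q-1)^{\Lambda-g}$ as in the Bourdon case. The transitivity hypothesis — a chamber whose fixator in $\mathrm{Isom}(X)$ acts transitively on the apartments through it — is what lets one compute the visual radius of the boundary ball intrinsically: choosing the visual metric of parameter $a=e^{\tau_0}$ with $\tau_0$ the growth rate of the apartment Coxeter group, Sullivan's shadow lemma applied to the Patterson--Sullivan density on $\partial B$ (Appendix~\ref{PStheory}, cf.\ Section~3.2 of \cite{BourdonPaper}) identifies $r$ with $a^{-g}$ up to bounded multiplicative error, giving $N_{cut}/(q-1)^\Lambda\asymp r^{\beta-1}$ with $\beta-1=\log(q-1)/\tau_0$; when $q=2$ the count collapses to $N_{cut}\sim 2(\Lambda-g)$ with $-g\sim\log r$, the logarithmic law. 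The hard part is the final identification $\beta=\mathrm{Hdim}\,\partial B=\tau_{\mathrm{full}}/\tau_0$ with $\tau_{\mathrm{full}}=\tau_0+\log(q-1)$: for an arbitrary building this requires a genuine conformal-density argument on $\partial B$ of the kind Bourdon carried out only for $I_{p,q}$ (using \cite{Coornaert} to see that this visual metric realizes the conformal dimension). The transitivity condition is exactly what is needed to run the shadow-lemma argument uniformly over apartments, but verifying the growth-rate arithmetic and the existence of the appropriate Patterson--Sullivan measure is not automatic, which is why the scaling is asserted only for a large subclass and must be checked for each explicit example below.
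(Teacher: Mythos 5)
Your proposal is correct and follows essentially the same route as the paper, which in fact gives no standalone proof of Theorem~\ref{nHypThm} but lets the preceding lemmas (isometry condition plus apartment-wise orientation, generalized tree-wall/link wedges for complementary recovery) and the transitivity-of-the-fixator discussion stand as the argument. Your closing caveat about the gap between the scaling dimension and the Hausdorff/conformal dimension of a visual metric — the reason for the ``large class'' qualifier — is exactly the content of the remark the paper places after the theorem.
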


\begin{rem}
Note that in this theorem, the Ryu--Takayanagi formula scales like the \textit{scaling dimension} (self-similarity dimension, see Chapter~4 of \cite{PJS}) of the boundary, but not necessarily like the Hausdorff dimension of a given visual metric. This is because in order for these two dimensions to be equal, one needs to show the existence of a visual metric with such a Hausdorff dimension, and this existence property is not always guaranteed. In the case of the Bourdon building, it was shown \cite{BourdonPaper} that this could be done by explicitly constructing a visual metric with parameter $a$ equal to the growth rate of the Weyl group of the building, that realizes the conformal dimension. However, it is a difficult and important problem in geometric group theory to understand when such a metric can be constructed in higher dimension. In particular, we expect this question to be related to subtle rigidity properties of the buildings. Even in order to show a matching of the scaling dimensions, one needs to have some nice formula that relates the growth rates of the building and of its apartments. See for example \cite{Clais}, where some partial answers to these questions are given in the case of right-angled buildings, particularly in three dimensions. We leave a more precise study of the scaling properties of the RT surfaces in higher-rank buildings, as well as of these issues related to the comparison of Hausdorff, scaling and conformal dimensions, to future work.
\end{rem}

\subsection{Higher dimensional examples}

We now show that our techniques can be adapted to construct holographic codes in arbitrary integer dimensions (as well as non-integer boundary Hausdorff dimensions). This will be done through an explicit construction, due to Davis--Moussong, of an interesting class of hyperbolic buildings in any given dimension. This construction is slightly more technical than the rest of the paper, the main takeaway being that holographic codes exist in all integer bulk dimensions, and that the Ryu--Takayanagi formula and entanglement wedge reconstruction for well-chosen regions, carry over to these more general cases.

We have focused primarily on the Bourdon buildings, which are the primary example of hyperbolic buildings. More generally, it is known
that hyperbolic buildings are more difficult to obtain than Euclidean ones. Indeed, if one takes as part of the definition of buildings the
requirement that they are polyhedral complexes where apartments are (in the hyperbolic case) polyhedrally isometric to a tesselation of
the $n$-dimensional hyperbolic space, then there are strong restrictions on the dimension of hyperbolic buildings.  There is a bound
$n\leq 29$ on the dimension of a compact convex hyperbolic Coxeter polytope \cite{Vinberg}.
In particular, Theorem~\ref{nHypThm},
as stated, only applies in this range. 

In the case of the Bourdon buildings, as we have seen, a useful property is the fact that they are right-angled buildings. 
In general, a Coxeter system $(W,S)$ is right-angled  if all the $m_{ij}$ with $i\neq j$ in the relations are equal to $2$ or $\infty$.
With this further requirement, it is known that right-angled Coxeter polytopes can only exist in dimension $n\leq 4$ \cite{PotVin}.
Thus, right-angled hyperbolic buildings (with the definition as above) can only exist in dimension $n\leq 4$. 

This seems to limit the range of validity of the general setting we introduced above for the construction of holographic
tensor networks on hyperbolic buildings. However, it is in fact possible to relax slightly the definition of buildings in such
a way that right-angled hyperbolic buildings will be available in arbitrary dimension. This was done in \cite{JanSw}
by considering geometries where the apartments are Davis--Moussong complexes of the Coxeter group $W$, instead of copies
of hyperbolic space tessellated by the action of $W$. In general, the Davis--Moussong complex associated to a Coxeter system
$(W,S)$ is a piecewise Euclidean (non-positively curved) 
polyhedral complex with a properly discontinuous cocompact action of $W$, see
\cite{Davis, Mouss}.

The construction of the Davis--Moussong complex $K(W,S)$ is obtained in the following way~\cite{Mouss}.
Given a Coxeter system, the associated nerve $N(W,S)$ is the simplicial complex with one
vertex for each element of $S$, with subset $T\subset S$ defining a simplex of $N(W,S)$
if the subgroup $W_T\subset W$ generated by $T$ is finite. 
Let $N'$ be the barycentric subdivision of $N=N(W,S)$ and let $CN'$ be the cone of $N'$. 
For each $s\in S$ let $X_s$ be the closed star of the vertex $s$ in $N'$. The collection
$M=\{ X_s \}_{s\in S}$ of closed subspaces of $X=CN'$ is called the set of panels. For $x\in X$ 
one also sets $S(x):=\{ s\in S\,|\, x\in X_s \}$.
There is an associated a universal $W$-space $U(W,X,M)$, with a CW complex structure, 
given by the quotient of $W\times X$
by the equivalence relation $(w,x)\sim (w'x')$ if $x=x'$ and $w^{-1} w'\in W_{S(x)}$. 
This space is universal with respect to maps $f: X\to Y$ with $s f(x)=f(x)$ for $x\in X_s$:
each such map uniquely extends to a continuous $W$-equivariant map $f: U\to Y$.
The Davis--Moussong complex is $K(W,S)=U(W,CN',M)$. It is contractible with $K(W,S)/W\simeq CN'$. 

There is a characterization of hyperbolic Coxeter groups, in terms of the presence of a 
hyperbolic structure on the Davis--Moussong complex $K(W,S)$. The Coxeter group is
hyperbolic iff it contains no subgroup isomorphic to ${\mathbb Z} \oplus {\mathbb Z}$.
This condition is in turn equivalent to the condition that there is no subset $T\subset S$
with $W_T$ an affine Coxeter system of rank $\geq 3$ and there are no pairs $T_1,T_2$ of
disjoint subsets of $S$ for which $W_{T_1}$ and $W_{T_2}$ commute and are infinite, 
see Theorem 17.1 of \cite{Mouss}. This last condition in turn implies that $K(W,S)$ can be given
a hyperbolic structure by considering for each $T\subset S$ with $W_T$ finite a hyperbolic space
${\mathbb H}^{n_T}$ with $n_T=\# T$. The building blocks of $CN'$ are the $CN(W_T,T)'$
for $T\subset S$ with $W_T$ finite. Each of these building blocks is homeomorphic to a
combinatorial $n_T$-cube $B(W_T)_\epsilon\subset {\mathbb H}^{n_T}$, for some $\epsilon>0$
(see \cite{Mouss} for more details). These building blocks are glued
together according to the relations of subsets $T\subset S$ into a decomposition of $K(W,S)$. 
The condition above on the hyperbolicity of the Coxeter group $W$ ensures that, with this
hyperbolic structure on the blocks $CN(W_T,T)'$, the complex $K(W,S)$ itself has a the
structure of a hyperbolic complex. 

The difference with the usual apartments of buildings in the more restrictive sense is that 
here the contractible manifolds $K(W,S)$ are in general not homeomorphic to Euclidean 
space (hence not tessellated copies of ${\mathbb H}^n$). Indeed, the boundary at infinity of $K(W,S)$
is not necessarily simply connected, but it has the topology of a generalized $(n-1)$-dimensional 
homology sphere, see \cite{Davis} and \cite{JanSw}.

The construction of \cite{JanSw} of higher dimensional right-angled hyperbolic buildings with
Davis--Moussong complexes as apartments is obtained via complexes of groups. 

A complex of groups is an assignment of groups and compatible maps to a simplicial
complex that reflects the properties of the orbit space of a group action on a cell complex, see \cite{Corson}.
It generalizes the Bass--Serre construction of graphs of groups. 

A complex of groups $G(K)$ consists of combinatorial CW complex $K$ 
(namely a CW complex that is either simplicial or that can be subdivided into simplicial complexes), 
with a group $G_{e_\alpha}$ assigned to each cell $e_\alpha$ and monomorphisms $\phi_{\alpha,\beta}: 
G_{e_\alpha} \to G_{e_\beta}$ for each cell $e_\beta$ in the boundary of $e_\alpha$. Boundary inclusions
$e_\gamma \subset e_\beta \subset e_\alpha$ give $Ad(g)\phi_{\alpha,\gamma}=\phi_{\beta, \gamma}\circ \phi_{\alpha,\beta}$, for some $g\in G_{e_\gamma}$ acting by conjugation. For our purposes we can assume that
$K$ is a polyhedral complex (or a simplicial complex after passing to barycentric subdivision). 

The complex of groups associated to a simplicial action of a group on a combinatorial cell complex has
finite stabilizer groups attached to the cells, with monomorphisms of stabilizers contravariantly 
associated to inclusions of cells. A complex of groups is developable if it is the
complex of groups associated to a simplicial action of a group on a simply connected 
combinatorial cell complex. Not all complexes of groups are developable, but developability
is implied by a non-positively curved condition \cite{BriHae}. 

The construction via complexes of groups of a right-angled building with apartments shaped as 
Davis--Moussong complexes $K(W,S)$ is obtained as follows. Start with a right-angled Coxeter
system $(W,S)$ that satisfies the hyperbolicity condition above, so that $K(W,S)$ has a hyperbolic
structure. Take as additional datum a set $\{ q_s \}_{s\in S}$ of integers $q_s\geq 2$, and let $G_s$
be a group of order $q_s$. 
As above, vertices of $CN'$ has type some $J\subset S$ with $W_J$ finite.  Let
$G(K)$ be the complex of groups that assigns to a vertex of type $J$ the group given by
the direct product $G_J=\amalg_{s\in J} G_s$, with maps given by inclusions. This complex
of groups is developable with cover a right-angled building. 
As above let $X_s$ be the closed star of the vertex $s$ in $N'$. Each copy of $X_s$ is
contained in $q_s$ chambers in this building, with each chamber given by a copy of $CN'$.  

The existence in any dimension of this type of hyperbolic buildings with Davis--Moussong apartments 
is then proved in \cite{JanSw} by showing the existence in any dimension of a right-angled Coxeter system
$(W,S)$ satisfying the hyperbolicity condition, i.e. containing no subgroup isomorphic 
to ${\mathbb Z} \oplus {\mathbb Z}$. 

This class of buildings satisfy our conditions for tensor networks with good holographic properties. 

\begin{thm}\label{DMbuildings}
The hyperbolic buildings built using Davis--Moussong complexes and with all the $q_s=q$, satisfy the isometry condition 
and have complementary recovery.\end{thm}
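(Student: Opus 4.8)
The plan is to reduce Theorem~\ref{DMbuildings} to two ingredients already in hand: the clique count underlying Lemma~\ref{IsomRightAngleLem}, transported from the polytope setting to the Davis--Moussong setting, and the lemma asserting that the entanglement-wedge constructions of Section~\ref{WedgeSec} extend to any hyperbolic building satisfying the isometry condition and there satisfy complementary recovery. Throughout, let $(W,S)$ be the right-angled Coxeter system underlying the building (so all $m_{st}\in\{2,\infty\}$), $N=N(W,S)$ its nerve, and recall that by the construction of \cite{JanSw} summarized above $N$ is a flag triangulation of a generalized homology $(n-1)$-sphere with no empty square (Moussong's hyperbolicity criterion for right-angled Coxeter groups), while the branching datum has all $q_s=q$, so every panel $X_s$ lies in exactly $q$ chambers and the branching is the uniform number $q-1$. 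The only genuine work is then the combinatorics of the isometry condition; complementary recovery will follow formally.

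For the isometry condition, recall $\mathcal{F}=\{J\subset S:\ W_J\ \text{finite}\}$. Since $(W,S)$ is right-angled, $W_J\cong(\mathbb{Z}/2)^{|J|}$ is finite precisely when the elements of $J$ pairwise commute, i.e.\ when $J$ is the vertex set of a simplex of $N$; as $N$ is an $(n-1)$-dimensional pseudomanifold, every such $J$ satisfies $|J|\le n$. On the other hand the chamber is $CN'$, whose codimension-one panels are exactly the $|S|$ closed stars $\{X_s\}_{s\in S}$, so the perfect tensor carries one contracted index direction per panel and, after dividing out the branching, the parameter $t$ is $|S|$ plus the number of dangling legs per branch, hence $t\ge |S|+1$ (one takes more dangling legs if needed, as in the remark following Definition~\ref{BtensorDef}, which also fixes the parity required of a perfect tensor). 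Thus $|J|\le\frac{t-2}{2}$ reduces to $|S|\ge 2n+1$ --- the Davis--Moussong analogue of ``a right-angled hyperbolic polytope has more facets than a hypercube'': a flag triangulation of a homology $(n-1)$-sphere with no empty square has at least $2n+1$ vertices. For $n=2$ this is the statement that a flag triangulation of $S^1$ with no empty square is a cycle of length $\ge 5$, which is exactly the inequality $p\ge 5$ that makes the Bourdon network isometric (there $t=p+1$ and the maximal finite parabolic has rank $2$); for the $n$-dimensional nerves of \cite{JanSw} the inequality holds with room to spare, and any residual slack is absorbed by adjoining dangling legs. With the isometry condition established, the remark after Lemma~\ref{IsomRightAngleLem} already promotes the network to an isometry at each layer.

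It remains to deduce complementary recovery, which is precisely the content of the lemma that the wedge constructions of Section~\ref{WedgeSec} go through for any hyperbolic building satisfying the isometry condition: the apartments here are Davis--Moussong complexes $K(W,S)$, on which each $s\in S$ still acts as a reflection with a well-defined wall and two half-spaces, so both the tree-wall-type wedges of Lemma~\ref{WedgeLem1} and the link-type wedges of Lemma~\ref{WedgeLem2} are defined, and the homogeneity of the link (it is cut out by the fixed nerve $N$) together with the uniform branching ($q_s\equiv q$) supplies the symmetry needed for the greedy algorithm to reach such a wedge both from the corresponding boundary region and from its complement, exactly as in Proposition~\ref{RecoveryProp}. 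I expect the main obstacle to be the first half of the middle step: making the ``number of tensor indices divided by the branching'' bookkeeping fully precise for the chamber $CN'$ and pinning down the vertex bound $|S|\ge 2n+1$ (equivalently, the sharp hypercube-facet-type inequality for flag no-empty-square triangulations of homology spheres); everything else is a direct transcription of the Bourdon arguments of Section~\ref{BourdonSec}.
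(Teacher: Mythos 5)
Your proposal is correct and follows essentially the same route as the paper: the paper likewise identifies the finite parabolics $W_J$ with the sets of panels meeting at a single vertex of the chamber and derives the isometry condition from Moussong's girth/flag--no--square condition --- its phrase ``more faces of the chamber will not touch that vertex than touch it'' is exactly your inequality $|S|\ge 2n+1$ combined with $|J|\le n$ --- and it then gets complementary recovery by rerunning the wedge constructions of Section~\ref{WedgeSec} inside Davis--Moussong apartments, just as you do (it additionally checks the apartment-transitivity condition of Theorem~\ref{nHypThm}, which is not part of the statement). The step you flag as the main obstacle does hold and closes by a short induction on $n$, using the homology-sphere hypothesis you correctly impose on the nerve (hyperbolicity alone would not suffice): in a flag, no-empty-square triangulation of a generalized homology $(n-1)$-sphere the link of any vertex is again flag, no-square, and a generalized homology $(n-2)$-sphere, and the complex cannot be the (acyclic) cone on that link because $H_{n-1}\neq 0$, so $|S|\ge 1+\bigl(2(n-1)+1\bigr)+1=2n+1$, with base case the cycle of length at least $5$.
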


\begin{proof}
The isometry condition is still satisfied: as shown in \cite{Mouss}, the girth of the links is strictly greater than $2\pi$, which means that if one fixes a vertex of a chamber of the building, as the chamber is right-angled, more faces of the chamber will not touch that vertex than touch it, and $W_J$ can be finite only if $J$ only contains faces that touch the same vertex. This same argument allows us to prove complementary recovery for well-chosen regions (generalizations of either tree-walls or vertex-based entanglement wedges). The third condition formulated in Theorem~\ref{nHypThm} is satisfied by construction: the Davis--Moussong complex can be seen as the quotient of the building by the action of the group developed by the complex of stabilizers. So by acting on an apartment by all the group elements that stabilize a given cell (which makes sense because the action is simplicial), one can obtain all apartments containing that cell. 
\end{proof}

\begin{rem}In the proof, we also included a discussion of the third condition of Theorem~\ref{nHypThm}, and it should also be possible to formulate a nice statement about of the area of the Ryu--Takayanagi surfaces in terms of the scaling, Hausdorff or conformal dimensions of the boundary. This question should also be interesting from the point of view of geometric group theory.
\end{rem}

In particular, this shows the existence of tensor networks with good holographic properties in arbitrary dimension.

\section{Discussion}
\label{DiscussSec}
In this paper, we explained how one could construct a large class of holographic tensor networks from hyperbolic buildings. The language of buildings and Gromov-hyperbolicity allows to recover the usual properties of hyperbolic tensor networks and to describe them in a unified way. In particular, our buildings:
\begin{itemize}
\item Contain a large class of bulk regions that satisfy complementary recovery. These regions admit an explicit description in terms of building theory.
\item Satisfy the Ryu--Takayanagi formula. For ball-shaped boundary regions of Hausdorff dimension strictly greater than 1, the entanglement entropy of holographic states follows a power law in the radius of the ball, with exponent given by the \textit{Hausdorff dimension of the boundary minus one}, in a large number of cases including the one of Bourdon buildings. If the boundary has dimension 1, we recover the logarithmic scaling of the HaPPY code.
\item Exist for boundaries of all integer dimensions, and therefore provide explicit examples of holographic codes in all integer dimensions.
\item Recover all known holographic codes constructed out of networks of perfect tensors as particular cases. In particular, the HaPPY code as well as the higher-dimensional examples of \cite{Kohler:2018kqk} can be studied through the lens of our construction.
\end{itemize}

Several future directions can be envisioned. First, we only considered holographic codes made out of perfect tensors in this paper, but it would be nice to study holographic codes made out of random tensors in our context. 

It would also be interesting to understand the situation for disconnected boundary regions better. Our explicit entanglement wedge constructions all involve a connected boundary region, and it would be nice to understand whether similar descriptions based on building theory hold for disconnected boundary regions.

Another question is whether the fact that the entanglement entropy of ball-shaped regions sees the fractal dimension of the boundary means that it could be possible to define theories that resemble conformal field theory on a fractal background (see \cite{Boyle:2018uiv} for a related attempt in one dimension).

It is also interesting to ask if more general objects than hyperbolic buildings are well-adapted to the construction of holographic codes. In particular, it would be interesting to understand if quotients of our buildings can be taken in order to describe nontrivial bulk topologies in the spirit of the BTZ-like topologies constructed in \cite{Heydeman:2018qty}. We expect that the setup of latin square designs \cite{CMM} might be helpful to think about this kind of problem.

Finally, this paper showed that the theory of Gromov hyperbolicity can be very useful to show geometric results about the bulk, such as the Ryu--Takayanagi formula. One can then wonder whether the notions of Gromov product and of hyperbolicity can also be utilized in the continuum, to understand the geometric structure of the bulk in full AdS/CFT. 

\section*{Acknowledgements}
This work was partially supported by NSF grants DMS-1707882 and DMS-2104330.

\appendix 

\section{An example: the building $I_{5,3}$}
\label{APPI53}

In this appendix, we explicitly compute the various quantities and regions described in the bulk of the paper in the case of the simplest Bourdon building: the building $I_{5,3}$.

We first describe explicitly the structure of the building $I_{5,3}$.
\begin{itemize}
\item \textbf{Apartment structure:} The apartments of the building $I_{5,3}$ are HaPPY tessellations of the hyperbolic plane: regular, right-angled tessellations by pentagons. Their Schläfli symbols are $\{5,4\}$. Their Coxeter group is generated by five reflections $r_1,r_2,r_3,r_4,r_5$ such that $(r_ir_{i+1})^2=1$ (where the index is understood modulo 5).
\item \textbf{Link:} The link is the bipartite complete graph $K(3,3)$. This means that at each vertex of the building, six edges concur. These six edges can be split into two groups of three. Each edge shares a face only with the three edges of the opposite group.
\end{itemize}

Then, we introduce the growth rates of the Weyl group and the isometry group of the building \cite{BourdonPaper}:
\begin{itemize}
\item \textbf{Growth rate of the Weyl group:} The growth rate of the Weyl group is equal to \begin{equation}\tau(5,2)=\mathrm{Arccosh}\left(\frac{3}{2}\right)\end{equation}
\item \textbf{Growth rate of the isometry group:} The growth rate of the isometry group is equal to \begin{equation}\tau(5,3)=\mathrm{Arccosh}\left(\frac{3}{2}\right)+\mathrm{log}\, 2.\end{equation}
\end{itemize}

In the case of the Bourdon building, it is possible to show \cite{BourdonPaper} that there exist visual metrics on $\partial I_{5,3}$ whose Hausdorff dimension realizes the conformal dimension of the building, and is equal to the ratio of the growth rate of the isometry group by the growth rate of the Weyl group. We then have:

\begin{equation}
    \mathrm{Hdim}(\partial I_{p,q})=1+\frac{\mathrm{log}\, 2}{\mathrm{Arccosh}\left(\frac{3}{2}\right)}
\end{equation}
The second term on the right hand side corresponds to the exponent of the scaling of entanglement entropy.

Finally, we describe ``ball entanglement wedge" bulk regions that satisfy complementary recovery in the building $I_{5,3}$. They are delimited either by a tree-wall or by the intersection of two tree-walls. In the case of $I_{5,3}$, a tree-wall divides the building into three valid entanglement wedges (see Figure~\ref{fig:I53}).

\begin{figure}[thb]
    \centering
    \includegraphics[scale=0.7]{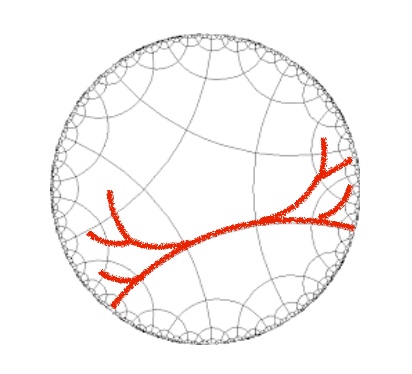}
    \caption{A tree-wall in $I_{5,3}$ and its intersection with an apartment. The tree-wall coincides with a geodesic in the HaPPY apartment, but at every tile, it divides into two branches due to the link structure of $I_{5,3}$. In the end, we end up with a homogeneous trivalent tree.}
    \label{fig:I53}
\end{figure}

\section{Quasi-conformal measures and Patterson--Sullivan theory}
\label{PStheory}
In this slightly more technical appendix, we elaborate on the idea that our networks describe some features of an approximate conformal field theory on a fractal space. The idea is that for any visual metric of the boundary of a Gromov hyperbolic group (such as the isometry group of the Bourdon building, which can be identified with its chambers), one can define a privileged measure that behaves ``almost conformally" under isometries. This measure is called the Patterson--Sullivan measure, and we review its construction here. We begin with a few definitions: 

\begin{defn}
Let $r$ be a geodesic ray in $X$. The Busemann function associated to $r$ is the map \begin{align*}h:\;&X\longrightarrow\mathbb{R}\\&x\longmapsto\underset{t\rightarrow +\infty}{\mathrm{lim}}(|x-r(t)|-t).\end{align*}
\end{defn}
Note that this function is well-defined by the triangle inequality.
\begin{defn}
Let $\Gamma$ be a group of isometries of $X$, let $\gamma\in\Gamma$. For $\xi\in\partial X$, choose a geodesic ray arriving at $\xi$, let $h$ be its Busemann function, and define $$j_\gamma(\xi):=a^{\Delta(\xi)},$$where $$\Delta(\xi)=h(O)-h(\gamma^{-1}O).$$ Let $\mu$ be a regular Borel measure on $\partial X$, of finite nonzero total mass. Then, define $$\gamma^\ast\mu:=\mu\circ\gamma.$$The measure $\mu$ is said to be $\Gamma$-quasiconformal of dimension $D$ if all the $\gamma^\ast\mu$ are absolutely continuous with respect to each other\footnote{This means that they have the same zero-measure sets.} for $\gamma\in\Gamma$, and there exists $C\geq 1$ such that $$C^{-1}j_\gamma^D\leq\frac{d\gamma^\ast\mu}{d\mu} \leq Cj_\gamma^D,$$ $\mu$-almost everywhere.\footnote{That is, the inequality holds everywhere except for a set of measure zero as measured with respect to $\mu$.}
\end{defn}

One can view a $\Gamma$-quasiconformal measure as a measure on $\partial X$ for which $\Gamma$ ``behaves like a conformal group". Under a reasonable assumption on $\Gamma$, there exists a generic construction of such a $\Gamma$-quasiconformal measure, due to Patterson and Sullivan \cite{Coornaert}.

Let $Y$ be the orbit of $O$ under $\Gamma$. For $R\geq 0$, let $n_Y(R)$ be the number of points of $Y$ within a distance $R$ of $O$. 

\begin{defn}
We define the \textit{base}-$a$ \textit{critical exponent of} $\Gamma$ as $$e_a(\Gamma):=\underset{R\rightarrow +\infty}{\mathrm{limsup}}\frac{\mathrm{log}_an_Y(R)}{R}.$$
For $s\geq 0$, we define the Poincaré series $$g_Y(s):=\sum_{y\in Y}a^{-s|y|}.$$ 
\end{defn}

One can then prove \cite{Coornaert} that this series is divergent for $s<e_a(\Gamma)$ and convergent for $s>e_a(\Gamma)$. Then, one can construct a sequence $(s_n)$ of limit $e_a(\Gamma)$, with $s_i>e_a(\Gamma)$. Consider the sequence of measures $$\mu_n:=\frac{1}{g_Y(s_n)}\sum_{y\in Y}a^{-s_n |y|}\delta_y,$$ where $\delta_y$ is the Dirac measure at $y$. As $X\cup\partial X$ is compact, one can then extract a weakly convergent subsequence of $(\mu_n)$. The limit $\mu$ of this subsequence is called a Patterson--Sullivan measure for $\Gamma$ and one can prove (up to some technical refinements on the sequence $(\mu_n)$ in the case where the Poincaré series converges at $e_a(\Gamma)$):
\begin{prop}[\cite{Coornaert}]
If $e_a(\Gamma)$ is finite, then the Patterson--Sullivan measure $\mu$ is $\Gamma$-quasiconformal of exponent $e_a(\Gamma)$. Moreover, its support is the limit set of $\Gamma$, denoted $\Lambda$.
\end{prop}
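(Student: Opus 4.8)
The statement to prove is the classical Patterson--Sullivan proposition: if $e_a(\Gamma)$ is finite, the measure $\mu$ obtained as a weak limit of the $\mu_n$ is $\Gamma$-quasiconformal of exponent $e_a(\Gamma)$, with support equal to the limit set $\Lambda$ of $\Gamma$. Since this is a review appendix, the plan is to reproduce the standard argument (following \cite{Coornaert}) rather than invent something new.

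\textbf{Plan of proof.} The first step is to record the transformation behaviour of the summands $a^{-s|y|}$ under the $\Gamma$-action. For $\gamma\in\Gamma$ and $y\in Y$ one has $y=\gamma'O$ for some $\gamma'\in\Gamma$, and $|\gamma^{-1}y|=d(O,\gamma^{-1}\gamma'O)$. The key identity, up to a bounded multiplicative error controlled by $\delta$ and $a$, is that $a^{-s|\gamma^{-1}y|}$ differs from $j_\gamma(y)^{s}\,a^{-s|y|}$ by a factor bounded above and below by constants independent of $y$ and $n$; this is exactly the statement that the Busemann cocycle $\Delta$ computes, to bounded error, the difference $|y|-|\gamma^{-1}y|$ for $y$ far out in the orbit (one uses that for $y$ near a boundary point $\xi$, $h(O)-h(\gamma^{-1}O)\approx |y|-|\gamma^{-1}y|$ with error $O(\delta)$). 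So $\gamma^*\mu_n$ has Radon--Nikodym derivative against $\mu_n$ that is, up to a uniform constant $C$, equal to $j_\gamma^{\,s_n}$.

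\textbf{Key steps in order.} (1) Establish the finite-error cocycle estimate above, i.e.\ $|\,|y| - |\gamma^{-1}y| - \Delta(\text{end of }y)\,|\le \kappa$ for a constant $\kappa=\kappa(\delta,a)$, using $\delta$-hyperbolicity and the definition of the Busemann function; deduce $C_0^{-1} j_\gamma(\xi_y)^{s} \le a^{s(|y|-|\gamma^{-1}y|)} \le C_0 j_\gamma(\xi_y)^{s}$ uniformly in $y$ and in $s$ ranging over the bounded set $\{s_n\}$. (2) Conclude that for each $n$, $C_0^{-1} j_\gamma^{\,s_n} \le \dfrac{d\gamma^*\mu_n}{d\mu_n} \le C_0 j_\gamma^{\,s_n}$ $\mu_n$-a.e. (3) Pass to the weak limit: extract the subsequence along which $\mu_n \rightharpoonup \mu$ in the compact space $X\cup\partial X$; since $j_\gamma$ is continuous and bounded on $\partial X$ and $s_n\to e_a(\Gamma)$, the two-sided bound survives the limit, giving $C_0^{-1} j_\gamma^{\,e_a(\Gamma)} \le \dfrac{d\gamma^*\mu}{d\mu} \le C_0 j_\gamma^{\,e_a(\Gamma)}$; mutual absolute continuity of the $\gamma^*\mu$ follows from these two-sided bounds. (One must check $\mu$ is supported on $\partial X$, not on $X$: any point of $X$ receives zero mass in the limit because $n_Y(R)$ growth and divergence of $g_Y$ at $e_a(\Gamma)$ force the mass of $\mu_n$ to escape to infinity — this is where divergence of the Poincar\'e series at the critical exponent, and the ``technical refinements'' parenthetically mentioned in the statement for the convergent case, are used.) (4) Identify the support with $\Lambda$: since each $\mu_n$ is supported on the orbit closure $\overline{Y}\subset X\cup\partial X$ and the accumulation points of $Y$ on $\partial X$ are by definition $\Lambda$, $\mathrm{supp}\,\mu\subseteq\Lambda$; for the reverse inclusion use $\Gamma$-quasiconformality (so $\mathrm{supp}\,\mu$ is $\Gamma$-invariant and closed) together with minimality of the $\Gamma$-action on $\Lambda$.

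\textbf{Main obstacle.} The delicate point is Step (3)'s claim that no mass escapes into the interior $X$ in the weak limit, equivalently that $\mu(\partial X)=1$. When the Poincar\'e series diverges at $s=e_a(\Gamma)$ this is immediate from the definition of $\mu_n$ (the normalizing factor $g_Y(s_n)\to\infty$ so each fixed ball in $X$ gets vanishing $\mu_n$-mass); the genuinely technical case is when $g_Y$ converges at the critical exponent, where one must replace $a^{-s_n|y|}$ by $\rho(|y|)a^{-s_n|y|}$ for a slowly varying weight $\rho$ chosen (via a standard lemma of Patterson) to force divergence while not disturbing the cocycle estimates — this is precisely the ``technical refinement'' the statement alludes to, and verifying it does not disrupt the bounds in Steps (1)--(2) is the one place real care is needed. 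Everything else is bookkeeping with the $\delta$-hyperbolic inequality and weak convergence on a compact space, and we will simply cite \cite{Coornaert} for the details.
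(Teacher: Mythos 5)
Your outline is essentially the paper's own treatment: the paper states this proposition as a quoted result and gives no proof beyond citing \cite{Coornaert}, and your sketch is precisely the standard Patterson--Sullivan argument from that reference (cocycle estimate via Busemann functions, two-sided bounds on $d\gamma^*\mu_n/d\mu_n$, weak limit on the compact space $X\cup\partial X$, Patterson's weight trick when the Poincar\'e series converges at the critical exponent, and support identification via $\Gamma$-invariance and the limit set). The only loose point --- passing the Radon--Nikodym bounds through the weak limit and the restriction of the cocycle estimate to orbit points near the relevant boundary point --- is exactly what the cited reference makes rigorous, so deferring those details to \cite{Coornaert} is consistent with what the paper itself does.
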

Under the extra assumption of $\Gamma$ being quasi-convex cocompact \cite{Coornaert}, more can be said about the space of $\Gamma$-quasiconformal measures on $\partial X$:
\begin{prop}[\cite{Coornaert}]
If $\Gamma$ is a quasi-convex cocompact group acting on $X$ such that $e_a(\Gamma)$ is finite, then, $\Lambda$ has Hausdorff dimension $e_a(\Gamma)$, and the associated Hausdorff measure is $\Gamma$-quasiconformal of dimension $e_a(\Gamma)$. Moreover, if $\mu$ is another $\Gamma$-quasiconformal measure whose support is contained in $\Lambda$, then it has dimension $e_a(\Gamma)$ and it is of the form $\psi\mathcal{H}$, where $\mathcal{H}$ is the Hausdorff measure of $\Lambda$ and $\psi\in L_0^\infty(\mathcal{H})$. Reciprocally, all such measures are $\Gamma$-quasiconformal of dimension $e_a(\Gamma)$ with support contained in $\Lambda$.
\end{prop}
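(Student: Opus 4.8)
The statement is classical and due to Coornaert; I sketch the structure of the argument, which is ultimately powered by a single technical ingredient, a \emph{shadow lemma} adapted to the Gromov-hyperbolic setting. For $y=\gamma O\in Y$ and $R>0$, let the shadow $\mathcal{O}_R(y)\subset\partial X$ be the set of endpoints of geodesic rays from $O$ that pass within distance $R$ of $y$; by the visual-metric estimate~\eqref{distGr}, once $R$ is fixed large enough, $\mathcal{O}_R(y)$ is comparable to a ball of radius of order $a^{-|y|}$ in the visual metric. The first step is to show that for any $\Gamma$-quasiconformal measure $\mu$ of dimension $D$ with $\mathrm{supp}\,\mu=\Lambda$ there is a constant $C=C(R)\geq 1$ with
\begin{equation}
C^{-1}\,a^{-D|y|}\ \leq\ \mu(\mathcal{O}_R(y))\ \leq\ C\,a^{-D|y|}\qquad(y\in Y).
\end{equation}
The proof applies $\gamma^{-1}$: the set $\gamma^{-1}\mathcal{O}_R(y)$ is a shadow of $O$ of uniformly bounded size, so its $\mu$-mass lies between two positive constants (here one uses $\Lambda=\mathrm{supp}\,\mu$ together with the quasi-convex cocompactness of $\Gamma$, which makes finitely many such shadows cover $\Lambda$), and the quasiconformality inequality $C^{-1}j_\gamma^D\leq d\gamma^\ast\mu/d\mu\leq C j_\gamma^D$ together with the Busemann identity $\log_a j_\gamma(\xi)=h(O)-h(\gamma^{-1}O)$ and standard Gromov-product estimates converts this $\mu$-mass into the displayed $a^{-D|y|}$ behaviour. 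I expect this step to carry essentially all the difficulty, as it is where the coarse geometry of quasi-convex cocompact actions enters.

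Granting the shadow lemma, the remainder is metric-measure theory. Because $\Gamma$ is quasi-convex cocompact, for every $\xi\in\Lambda$ and every small $r>0$ one can choose $y\in Y$ near the geodesic toward $\xi$ with $a^{-|y|}$ of order $r$, and then $B(\xi,r)$ is sandwiched between shadows $\mathcal{O}_{R_1}(y)$ and $\mathcal{O}_{R_2}(y)$ for suitable fixed $R_1<R_2$; the shadow lemma thus upgrades to Ahlfors $D$-regularity, $C^{-1}r^D\leq\mu(B(\xi,r))\leq C r^D$ for all $\xi\in\Lambda$ and small $r$. An Ahlfors $D$-regular metric space has Hausdorff dimension exactly $D$, with positive and finite $D$-dimensional Hausdorff measure $\mathcal{H}$ that is moreover comparable to $\mu$, i.e.\ mutually absolutely continuous with Radon--Nikodym derivatives bounded above and below. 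Applying this to the Patterson--Sullivan measure, which has dimension $e_a(\Gamma)$ by the previous proposition, gives $\mathrm{Hdim}\,\Lambda=e_a(\Gamma)$ and $\mathcal{H}$ comparable to $\mu_{\mathrm{PS}}$; since $\mu_{\mathrm{PS}}$ is $\Gamma$-quasiconformal of dimension $e_a(\Gamma)$ and the comparison density is bounded above and below, $\mathcal{H}$ is likewise $\Gamma$-quasiconformal of dimension $e_a(\Gamma)$.

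For uniqueness, let $\nu$ be any $\Gamma$-quasiconformal measure with $\mathrm{supp}\,\nu\subset\Lambda$. Since $\nu$ and each $\gamma^\ast\nu$ are mutually absolutely continuous, $\mathrm{supp}\,\nu$ is a nonempty closed $\Gamma$-invariant subset of $\partial X$, hence contains the limit set, so $\mathrm{supp}\,\nu=\Lambda$. If $\nu$ has quasiconformal dimension $D'$, the shadow lemma applied to $\nu$ gives $\nu(B(\xi,r))\asymp r^{D'}$ on $\Lambda$, forcing $\mathrm{Hdim}\,\Lambda=D'$ and hence $D'=e_a(\Gamma)$. Now $\nu$ and $\mathcal{H}$ are both Ahlfors-regular of the same dimension, so by the Lebesgue differentiation theorem on $(\Lambda,|\cdot|)$ the density $\psi:=d\nu/d\mathcal{H}=\lim_{r\to 0}\nu(B(\xi,r))/\mathcal{H}(B(\xi,r))$ exists and is bounded above and below by positive constants; thus $\psi\in L_0^\infty(\mathcal{H})$ and $\nu=\psi\mathcal{H}$. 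Conversely, for $\psi\in L_0^\infty(\mathcal{H})$ the measure $\psi\mathcal{H}$ has support $\mathrm{supp}\,\mathcal{H}=\Lambda$, and
\begin{equation}
\frac{d\gamma^\ast(\psi\mathcal{H})}{d(\psi\mathcal{H})}\ =\ \frac{\psi\circ\gamma}{\psi}\cdot\frac{d\gamma^\ast\mathcal{H}}{d\mathcal{H}} ,
\end{equation}
where the first factor is bounded above and below by the $L_0^\infty$ bounds on $\psi$ and the second is comparable to $j_\gamma^{\,e_a(\Gamma)}$ because $\mathcal{H}$ is $\Gamma$-quasiconformal of that dimension; so the product is comparable to $j_\gamma^{\,e_a(\Gamma)}$ and $\psi\mathcal{H}$ is $\Gamma$-quasiconformal of dimension $e_a(\Gamma)$ with support in $\Lambda$, which is the last point to check.
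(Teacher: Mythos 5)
Your sketch is correct: it is the standard Coornaert argument (shadow lemma for quasiconformal measures, upgraded via the visual metric to Ahlfors $D$-regularity on $\Lambda$, which pins down the Hausdorff dimension and the comparability with $\mathcal{H}$, followed by the chain-rule computation for the converse). The paper gives no proof of its own here --- the proposition is quoted directly from \cite{Coornaert} --- and the key ingredient you isolate is precisely the shadow lemma that the paper records immediately afterwards as Proposition~6.1 of \cite{Coornaert}, so your route coincides with the cited source's.
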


A lot of results in Patterson--Sullivan theory rely on Sullivan's shadow lemma. We state it here in its most general form, see also Lemma 8 of \cite{Zhu} for a useful variant:

\begin{prop}[\cite{Coornaert}, Proposition~6.1]
Let $\mu$ be a $\Gamma$-quasiconformal measure of dimension $D$ on $\partial X$. If $O(x,d)$ is the intersection with $\partial X$ of the set of all geodesic rays starting at $O$ and passing at a distance smaller than $d$ of $x$, then there exist constants $C\geq 1$ and $d_0\geq 0$ such that for all $d\geq d_0$ and $\gamma\in\Gamma$, $$C^{-1}r^D\leq \mu(O(x,d))\leq Cr^Da^{2Dd},$$
where we have chosen $x=\gamma^{-1} O$, for the chosen base point $O$ 
and $r=a^{-|x|}$. The set $O(x,d)$ is called the shadow on $\partial X$ of the ball centered on $x$ with radius $d$.
\end{prop}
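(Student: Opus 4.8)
The plan is to derive both inequalities purely from the defining Radon--Nikodym bound $C^{-1}j_\gamma^{D}\le d\gamma^{*}\mu/d\mu\le C j_\gamma^{D}$ of a $\Gamma$-quasiconformal measure, together with two elementary estimates on Busemann functions in a $\delta$-hyperbolic space. The mechanism is that, since $x=\gamma^{-1}O$, the shadow $O(x,d)$ is the $\gamma^{-1}$-translate of the shadow cast from $\gamma O$ by the \emph{fixed} ball $B(O,d)$; transporting the mass by $\gamma$ trades a variable set for a fixed one at the price of the conformal factor $j_\gamma$, so the whole point is to control $j_\gamma$ along the shadow.

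First I would show there is a constant $c=c(\delta)$ with $|x|-2d-c\le\Delta(\xi)\le |x|+c$ for every $\xi\in O(x,d)$, where $\Delta(\xi)=h_\xi(O)-h_\xi(x)$ is the Busemann cocycle appearing in $j_\gamma(\xi)=a^{\Delta(\xi)}$, and $h_\xi$ is the Busemann function of some geodesic ray ending at $\xi$. The upper bound is the $1$-Lipschitz property $|h_\xi(O)-h_\xi(x)|\le d(O,x)=|x|$, up to the $\delta$-ambiguity in the choice of ray; the lower bound uses the definition of the shadow, namely that the ray $[O,\xi)$ comes within $d$ of $x$, so that moving from $O$ toward a point that $\xi$ ``sees'' decreases $h_\xi$ by nearly the full distance $d(O,x)$ --- the standard visibility estimate in $\delta$-hyperbolic spaces. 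Writing $r=a^{-|x|}$, this says $c^{-1}r^{D}\le j_\gamma(\xi)^{-D}\le c\,r^{D}a^{2Dd}$ on $O(x,d)$.

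For the upper bound, invert the Radon--Nikodym inequality (the measures being mutually absolutely continuous) and integrate:
\[
\mu\big(O(x,d)\big)=\int_{O(x,d)}\frac{d\mu}{d\gamma^{*}\mu}\,d\gamma^{*}\mu\le C\Big(\sup_{O(x,d)}j_\gamma^{-D}\Big)\,\gamma^{*}\mu\big(O(x,d)\big)\le C\,c\,\|\mu\|\,r^{D}a^{2Dd},
\]
since $\gamma^{*}\mu(O(x,d))=\mu(\gamma\cdot O(x,d))\le\|\mu\|$ and the supremum is bounded by the previous step. For the lower bound, I would first fix $d_{0}$ large enough, independently of $\gamma$, that $O(x,d_{0})$ contains $\gamma^{-1}V$ for a ball $V\subset\partial X$ of a fixed visual radius lying inside $\mathrm{supp}\,\mu$ and well inside the ``efficient'' part of the shadow, so that $\mu(V)\ge v_{0}>0$ and the cheap side of the $\Delta$-estimate applies on $\gamma^{-1}V$. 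Then for $d\ge d_{0}$, using monotonicity of shadows and $\frac{d\mu}{d\gamma^{*}\mu}\ge C^{-1}j_\gamma^{-D}$,
\[
\mu\big(O(x,d)\big)\ge\mu\big(\gamma^{-1}V\big)\ge C^{-1}\Big(\inf_{\gamma^{-1}V}j_\gamma^{-D}\Big)\gamma^{*}\mu\big(\gamma^{-1}V\big)\ge C^{-1}c^{-1}\,\mu(V)\,r^{D}\ge (C'')^{-1}r^{D},
\]
where $\gamma^{*}\mu(\gamma^{-1}V)=\mu(V)$. Taking the maximum of the constants produced gives the claimed $C\ge 1$.

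The step I expect to be the main obstacle is the uniform choice of $d_{0}$ and $v_{0}$ in the lower bound: one must know that a fixed ball about the basepoint always casts a shadow of uniformly positive $\mu$-measure as $\gamma O$ recedes to infinity. This is where one invokes that $\mathrm{supp}\,\mu$ is the limit set of $\Gamma$ and that $\Gamma$ acts coboundedly on the relevant part of $X$ (automatic in the sharper quasi-convex cocompact setting); for a general $\gamma$ one restricts to the regime in which $\gamma O$ stays within bounded distance of a geodesic through $O$, which is all that is needed in the applications of this paper. Everything else is routine manipulation of the Radon--Nikodym inequality and of the $1$-Lipschitz/near-geodesic behaviour of Busemann functions.
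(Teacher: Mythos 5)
A preliminary remark: the paper itself gives no proof of this proposition --- it is imported verbatim from Coornaert (Proposition~6.1) --- so the only benchmark is the standard argument there, whose skeleton you have correctly reproduced: the Radon--Nikodym bounds defining quasiconformality, the Busemann estimate $|x|-2d-c\le\Delta(\xi)\le|x|+c$ on the shadow, and the fact that $\gamma$ carries $O(\gamma^{-1}O,d)$ onto the shadow of the \emph{fixed} ball $B(O,d)$ seen from $\gamma O$, so that $\gamma^{*}\mu(O(x,d))=\mu(\gamma O(x,d))$. Your upper bound is complete and correct as written.

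The genuine gap is in the lower bound, precisely at the step you flag. A single ball $V$ of fixed visual radius, chosen independently of $\gamma$, cannot satisfy $\gamma^{-1}V\subset O(x,d_0)$ for all $\gamma$: the complement of $\gamma O(\gamma^{-1}O,d_0)$ in $\partial X$ has uniformly small visual diameter, but its \emph{location} tracks the direction in which $\gamma O$ recedes, so for those $\gamma$ whose orbit point escapes toward $V$ the excluded set sits inside $V$ and the inclusion fails. Your proposed repairs do not close this: restricting to $\gamma$ with $\gamma O$ near a fixed geodesic proves a weaker statement than the proposition (which is claimed for all $\gamma\in\Gamma$), and quasi-convex cocompactness is not among its hypotheses (it only enters the \emph{next} proposition of the appendix). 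The standard fix is different and needs no auxiliary $V$: show that the complement of the shadow seen from $\gamma O$ is contained in a set of visual diameter at most $\epsilon(d)$ with $\epsilon(d)\to 0$ uniformly in $\gamma$ (any boundary point missed by that shadow has Gromov product at least $d-c(\delta)$ with $\gamma O$, hence any two such points have Gromov product at least $d-c'(\delta)$ with each other), and combine this with a uniform non-concentration bound $\sup_{\xi\in\partial X}\mu\bigl(B(\xi,\epsilon)\bigr)\le\|\mu\|-v_0$ for some $\epsilon,v_0>0$, which follows by compactness of $\partial X$ from the fact that $\mu$ is not a Dirac mass (this is where the non-elementarity of $\Gamma$, part of Coornaert's standing assumptions though not restated in the paper, is used --- not the identification of $\mathrm{supp}\,\mu$ with the limit set). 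This gives $\mu(\gamma O(x,d))\ge v_0$ for all $d\ge d_0$ uniformly in $\gamma$, and then your own chain of inequalities, with the infimum of $j_\gamma^{-D}$ taken over the whole shadow, finishes the proof.
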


\end{document}